\newcommand{\cT}{\mathscr{T}}
\newcommand{\STAT}{\mathrm{STAT}}
\title{On Mean Estimation for General Norms with Statistical Queries}
\author{Jerry Li \\ Microsoft Research \\ \texttt{jerrl@microsoft.com}
 \and Aleksandar Nikolov \\ University of Toronto \\ \texttt{anikolov@cs.toronto.edu}
 \and Ilya Razenshteyn \\ Microsoft Research \\ \texttt{ilyaraz@microsoft.com} \vspace{0.1in}
  \and Erik Waingarten \\Columbia University\\ \texttt{eaw@cs.columbia.edu}\vspace{0.3cm}}
\begin{document}
\maketitle
\begin{abstract}
We study the problem of mean estimation for high-dimensional distributions given access to a statistical query oracle. For a normed space $X = (\R^d, \|\cdot\|_X)$ and a distribution supported on vectors $x \in \R^d$ with $\|x\|_{X} \leq 1$, the task is to output an estimate $\hat{\mu} \in \R^d$ which is $\eps$-close in the distance induced by $\|\cdot\|_X$ to the true mean of the distribution. We obtain sharp upper and lower bounds for the statistical query complexity of this problem when the the underlying norm is \emph{symmetric}
as well as for Schatten-$p$ norms, answering two questions raised by Feldman, Guzm\'{a}n, and Vempala (SODA 2017).
\end{abstract}
\thispagestyle{empty}

\setcounter{page}{1}

% !TEX root = main.tex

\section{Introduction}

Let $D$ be a distribution over $\mathbb{R}^d$. Informally speaking, in the statistical query model (SQ), one
learns about $D$ as follows. Given a query $h\colon \mathbb{R}^d \to [-1,1]$, the SQ oracle
with tolerance $\tau > 0$ reports $\Ex_{x \sim D} [h(x)]$ perturbed by error of scale roughly $\tau$.
The SQ model was introduced in~\cite{kearns1998efficient} as a way
to capture ``learning algorithms that construct a hypothesis based on statistical
properties of large samples rather than on the idiosyncrasies of a particular sample.''

The original motivation for the SQ framework was to provide an evidence of computational hardness
of various learning problems (beyond sample complexity) by proving
lower bounds on their SQ complexity.
Indeed, many learning algorithms (see~\cite{feldman2016statistical} for an overview)
can be captured by the SQ framework, and, furthermore, the only known technique that gives
a polynomial-time algorithm for a learning problem with exponential SQ complexity~\cite{kearns1998efficient} is \emph{Gaussian 
elimination over finite fields}, whose utility for learning is currently extremely limited.
This reasoning suggests the following heuristic:
\begin{center}
    \emph{If solving a learning problem to accuracy $\varepsilon > 0$ requires $d^{\omega(1)}$ SQ queries with tolerance $\varepsilon^{O(1)} / d^{O(1)}$, then it is unlikely to be doable in time $d^{O(1)}$
    using \emph{any} algorithm.}
\end{center}
This heuristic together with the respective SQ lower bounds provided strong evidence of hardness of
many problems such as: learning parity with noise~\cite{kearns1998efficient}, learning intersection of half-spaces~\cite{klivans2007unconditional}, the planted
clique problem~\cite{feldman2013statistical}, robust estimation of high-dimensional Gaussians and non-Gaussian component analysis~\cite{diakonikolas2017statistical}, learning a small neural network~\cite{song2017complexity},
adversarial learning~\cite{bubeck2018adversarial}, robust linear
regression~\cite{diakonikolas2019efficient}, among others.

However, over time, the SQ model has generated significant \emph{intrinsic} interest~\cite{feldman2016general}, in part
due to the connections to distributed learning~\cite{steinhardt2016memory} and local differential privacy~\cite{kasiviswanathan2011can}. In particular, the new goal is to understand the trade-off between
the number and the tolerance of SQ queries, and the accuracy of the resulting solution for various learning problems, which is \emph{more nuanced} than what is necessary for the above ``crude'' heuristic.
In a paper by Feldman, Guzman, and Vempala~\cite{FGV17}, this was done for
perhaps the most basic learning problem, \emph{mean estimation}, which is formulated as follows.

\begin{problem}[Mean estimation using statistical queries]
\label{main_prob_intro}
Let $D$ be a distribution over the \emph{unit ball} $B_X$ of a normed space $X = (\mathbb{R}^d, \|\cdot\|_X)$,
and suppose we are allowed $d^{O(1)}$ statistical queries with tolerance $\varepsilon > 0$. What is the smallest
$\varepsilon' > 0$, for which we can always recover a point $\widehat{x}$ such that
$\left\|\widehat{x} - \Ex_{\bx \sim D}[\bx]\right\|_{X} \leq \varepsilon'$
holds w.h.p.
\end{problem}
Clearly, $\varepsilon' \geq \varepsilon$, and, as~\cite{FGV17} showed, $\varepsilon' \leq O(\varepsilon \sqrt{d})$ for every norm. We say that a norm $\|\cdot\|$ over $\mathbb{R}^d$
is \emph{tractable} if one can achieve $\varepsilon' \leq \varepsilon \cdot \mathrm{poly}(\log d, \log(1/\varepsilon) )$ (with $\mathrm{poly}(d)$ queries of tolerance $\varepsilon$).
The main result of~\cite{FGV17} can be stated as follows.
\begin{theorem}[\cite{FGV17}]
    \label{vitaly_thm_intro}
    The $\ell_p$ norm over $\mathbb{R}^d$ is tractable if and only if $p \geq 2$.
\end{theorem}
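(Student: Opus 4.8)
This is a dichotomy, so the plan is to prove the two directions separately: an SQ algorithm for every $p \geq 2$, and an SQ lower bound for every $p < 2$.

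For the \textbf{lower bound} ($1 \leq p < 2$), I would run the standard SQ indistinguishability argument against a family of product distributions with a planted sign vector. Fix a bias parameter $c \in (0,1]$ to be chosen, and for $v \in \{-1,+1\}^d$ let $D_v$ be supported on $\{\pm e_1,\dots,\pm e_d\} \subseteq B_X$ (note $\|\pm e_i\|_p = 1$), placing mass $\tfrac{1}{2d}(1+cv_i)$ on $e_i$ and $\tfrac{1}{2d}(1-cv_i)$ on $-e_i$, so that $\mu_v := \Ex_{\bx \sim D_v}[\bx] = \tfrac{c}{d}\,v$; let $D_0$ be the unbiased version ($c=0$). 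For any query $h\colon\R^d\to[-1,1]$, writing $w_i := h(e_i)-h(-e_i)\in[-2,2]$, one computes $\Ex_{D_v}[h] - \Ex_{D_0}[h] = \tfrac{c}{2d}\langle v, w\rangle$, and since $\|w\|_2 \leq 2\sqrt d$, Hoeffding over a uniformly random $v$ gives $\Pr_v\big[\,|\Ex_{D_v}[h] - \Ex_{D_0}[h]| > \varepsilon\,\big] \leq 2\exp(-\Omega(d\varepsilon^2/c^2))$. Choosing $c = \Theta(\varepsilon\sqrt{d/\log d})$ (and $c=1$ if this exceeds $1$, which only helps) makes this at most $d^{-\omega(1)}$. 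Now consider the run of the (adaptive, $d^{O(1)}$-query) algorithm against the oracle that answers every query $h$ with $\Ex_{D_0}[h]$; this fixes a query sequence and an output $\widehat x_0$ depending only on the algorithm's coins, not on $v$. A union bound shows that for a $1 - d^{-\omega(1)}$ fraction of $v$ this oracle is a valid $\varepsilon$-tolerance oracle for $D_v$, so the algorithm must succeed while outputting $\widehat x_0$. But for any fixed $z$, setting $s_i := \mathrm{sign}(z_i)$ we have $|z_i - \tfrac{c}{d}v_i| \geq \tfrac{c}{d}$ whenever $v_i \neq s_i$, hence $\|z - \mu_v\|_p \geq \tfrac{c}{d}\,|\{i : v_i \neq s_i\}|^{1/p} \geq \Omega(c\,d^{1/p-1})$ except with probability $e^{-\Omega(d)}$ over random $v$. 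Therefore $\varepsilon' \geq \Omega(c\,d^{1/p-1}) = \Omega(\varepsilon\,d^{1/p-1/2}/\sqrt{\log d})$. Since $1/p - 1/2$ is a positive constant for $p<2$, this is not $\varepsilon\cdot\mathrm{poly}(\log d,\log(1/\varepsilon))$ (take e.g. $\varepsilon = d^{-\Theta(1)}$ so that $c \leq 1$), so $\ell_p$ is intractable. (The same computation at $p = 2$ gives only $\varepsilon' \geq \Omega(\varepsilon/\sqrt{\log d})$, consistent with tractability there.)

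For the \textbf{upper bound} ($2 \leq p \leq \infty$), the case $p = \infty$ is immediate: the $d$ queries $h_i(x) = x_i$ (valid since $|x_i| \leq \|x\|_\infty \leq 1$) return $\mu_i \pm \varepsilon$, so $\|\widehat\mu - \mu\|_\infty \leq \varepsilon$. For finite $p \geq 2$ I would first use the coordinate queries $h_i(x)=x_i$ ($|x_i|\leq\|x\|_p\leq1$) to recover the large coordinates of $\mu$: since $\|\mu\|_p \leq 1$ there are at most $t^{-p}$ coordinates with $|\mu_i| \geq t$, so the coordinates of magnitude at least $\mathrm{polylog}(d)\cdot\varepsilon$ can be estimated to relative accuracy $o(1)$ and contribute only $\varepsilon\cdot\mathrm{polylog}$ to the $\ell_p$-error, while the coordinates of magnitude at most $\varepsilon/d^{1/p}$ can be zeroed out at total cost $\varepsilon$. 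Grouping what remains into $O(\log(d/\varepsilon))$ dyadic scales reduces the problem, up to a polylogarithmic loss, to the core task: estimate a ``flat'' vector --- roughly $N$ coordinates each of magnitude $\beta$ with $N\beta^p \leq 1$ and $\beta$ of order $\varepsilon$ --- in $\ell_p$ to error $\varepsilon\cdot\mathrm{polylog}$. This is exactly where the dichotomy lives: the lower-bound construction above shows this flat-layer task is impossible for $p<2$, whereas for $p \geq 2$ I expect it to be achievable by a genuinely non-coordinatewise scheme --- for instance an adaptive procedure that repeatedly estimates $\langle v, \mu\rangle \pm \varepsilon$ for carefully chosen $v$ in the dual ball $B_q$ with $q = p/(p-1) \leq 2$, using the favorable (type-$2$) geometry of $\ell_p$-balls for $p \geq 2$ to certify, after $d^{O(1)}$ steps, that the remaining $\ell_p$-error is $\varepsilon\cdot\mathrm{polylog}$.

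The main obstacle is precisely this flat-layer estimation for $2 \leq p < \infty$: unlike the $p=\infty$ and lower-bound cases it cannot be done coordinate by coordinate, and any purely linear (hence any cutting-plane-style) scheme is limited to $\varepsilon\sqrt d$, so getting both the query count down to $d^{O(1)}$ and the error down to $\varepsilon\cdot\mathrm{polylog}$ requires extracting the quantitative content of ``$\ell_p$ has nontrivial type for $p \geq 2$''. The lower-bound direction, by contrast, should go through essentially as sketched once the Hoeffding and union-bound bookkeeping --- including the reduction of the adaptive algorithm to a fixed query sequence via the reference oracle --- is made precise.
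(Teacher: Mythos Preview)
Your lower bound is correct and is exactly the $\ell_p$ specialization of the paper's approach in Section~\ref{sec:type-2-lbs}: the basis vectors $\pm e_i$ are the witnesses to the large type-$2$ constant of $\ell_p$ for $p<2$, and your Hoeffding-plus-reference-oracle argument is the content of Lemma~\ref{lem:statis-dim} in that special case.

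The upper bound, however, has a genuine gap. The ``main obstacle'' you identify is in fact the whole algorithm, and your proposal does not supply it: ``adaptively estimate $\langle v,\mu\rangle$ for $v \in B_q$'' is exactly the generic linear scheme you yourself note is stuck at $\varepsilon\sqrt d$, and the layer decomposition you sketch, based on the magnitudes of the \emph{mean}'s coordinates, does not reduce the problem either (on a layer of $N \le t^{-p}$ coordinates at level $t$, the coordinate queries give $\ell_p$ error $\varepsilon N^{1/p} \approx \varepsilon/t$, which is $\gg \varepsilon\cdot\mathrm{polylog}$ once $t \ll 1$, so your ``large coordinates contribute only $\varepsilon\cdot\mathrm{polylog}$'' claim fails for small $\varepsilon$). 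The idea you are missing, which is Theorem~\ref{thm:sym-norm-ub} specialized to $\ell_p$ following \cite{FGV17}, is to decompose each \emph{sample} $x$, not the mean, into rings $R_j(x) = \sum_i e_i x_i \cdot \ind\{|x_i| \in (2^{-j-1},2^{-j}]\}$, so that the ring distribution $\calD_j$ is automatically supported in $2^{-j}B_{\ell_\infty} \cap m_{\ell_p}(2^{-j}) B_{\ell_2}$, and then to run \emph{both} the $\ell_\infty$ and the $\ell_2$ mean estimators on each $\calD_j$ after rescaling. The resulting error vector $y = \mu_j - w^{(j)}$ satisfies $\|y\|_\infty \le O(\varepsilon)\cdot 2^{-j}$ and $\|y\|_2 \le O(\varepsilon)\cdot m_{\ell_p}(2^{-j})$, and for $p\ge 2$ the elementary interpolation $\|y\|_p^p \le \|y\|_\infty^{p-2}\|y\|_2^2$ converts this into per-ring $\ell_p$ error $O(\varepsilon)$, hence $O(\varepsilon\log(d/\varepsilon))$ in total. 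Lemma~\ref{lem:inter} is precisely the generalization of this interpolation step from $\ell_p$ to arbitrary symmetric norms via $T_2(X)$.
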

The fact that the $\ell_\infty$ norm is tractable is trivial, since we can estimate
each coordinate of the mean separately. However, the corresponding algorithm for $\ell_p$
norms for $2 \leq p < \infty$ is more delicate and is based on random rotations,
while the na\"{\i}ve coordinate-by-coordinate estimator merely gives
$\varepsilon' = \varepsilon  d^{\Theta_p(1)}$.
\cite{FGV17} raise several intriguing open problems, among them
the following two:
\begin{enumerate}
    \item Characterize tractable norms beyond $\ell_p$;
    \item Solve Problem~\ref{main_prob_intro} for the spectral norm and other Schatten-$p$ norms of matrices;
\end{enumerate}
In this paper, we make progress towards solving the first problem and completely resolve the second one.

\subsection{Our results}

\paragraph{Symmetric norms.} Our first result gives a complete characterization of \emph{symmetric}
tractable norms. A norm is symmetric if it is invariant under all permutations of coordinates and sign flips (for many examples beyond $\ell_p$ norms, see~\cite{andoni2017approximate}). Recently there has been substantial progress in understanding various algorithmic tasks for general symmetric norms~\cite{blasiok2017streaming,andoni2017approximate,song2018towards,andoni2018subspace}.
In this paper, we significantly extend Theorem~\ref{vitaly_thm_intro} to all the symmetric norms.
To formulate our result, we need to define the \emph{type-$2$ constant} of a normed space, which is one of the standard
bi-Lipschitz invariants (\cite{wojtaszczyk1996banach}).

\begin{definition}
    For a normed space $X = (\mathbb{R}^d, \|\cdot\|_X)$, the type-$2$
    constant of $X$, denoted by $T_2(X)$, is defined as the smallest $T > 0$ such that the following holds.
    For every sequence of vectors $x_1, x_2, \ldots, x_n \in X$ and for uniformly random
    $\beps \sim \{-1, 1\}^n$, one has:
    \begin{align}
    \left(\Ex_{\beps \sim \{-1,1\}^n} \left\|\sum_{i=1}^n \beps_i x_i\right\|_X^2\right)^{1/2}
    \leq T \cdot \left(\sum_{i=1}^n \|x_i\|_X^2\right)^{1/2}. \label{eq:type}
    \end{align}
\end{definition}
We are now ready to state our result.
\begin{theorem}
    \label{our_symm_intro}
    A symmetric normed space $X = (\mathbb{R}^d, \|\cdot\|_X)$ is tractable
    iff $T_2(X) \leq \mathrm{poly}(\log d)$.
\end{theorem}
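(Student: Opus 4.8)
The characterization splits into a positive direction (small type-2 constant implies tractability) and a negative direction (large type-2 constant implies intractability). For the positive direction, the plan is to revisit the random-rotation algorithm behind Theorem~\ref{vitaly_thm_intro} and recast it in terms of the type-2 constant. Given a distribution $D$ on $B_X$, one first estimates the mean coordinatewise to accuracy $\eps$ per coordinate using $d$ statistical queries; the resulting error vector $v = \widehat{x} - \Ex[\bx]$ has $\|v\|_\infty \le O(\eps)$ but could have $\|v\|_X$ as large as $\eps$ times the ``unit-ball ratio'' of $X$ versus $\ell_\infty$. To beat this, one applies a random orthogonal (or random $\pm 1$ diagonal times permutation, exploiting symmetry) transformation before estimating: the point is that for a \emph{symmetric} norm, the expected norm of a random sign combination of the standard basis vectors is governed by $T_2(X)$ via~\eqref{eq:type}, and a Gaussian-rotation/symmetrization argument shows $\Ex\|v\|_X \lesssim T_2(X) \cdot \|v\|_2 / \sqrt d \cdot (\text{something})$, or more precisely that one can drive the $X$-error down to $\eps \cdot \poly(\log d)$ whenever $T_2(X) \le \poly(\log d)$. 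The key structural input is a known fact from Banach space theory (e.g.\ from \cite{wojtaszczyk1996banach} or the symmetric-norms literature \cite{andoni2017approximate,blasiok2017streaming}) relating $T_2$ of a symmetric norm to the ``concentration'' behavior of that norm on random sign vectors; I would also need a boosting/iteration step, reducing the error geometrically over $O(\log(1/\eps))$ rounds, each costing $\poly(d)$ queries.

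For the negative direction, the plan is to show that if $T_2(X) \ge \omega(\poly(\log d))$ then Problem~\ref{main_prob_intro} requires superpolynomially many queries (or equivalently cannot achieve $\eps' \le \eps \poly(\log d)$ with $\poly(d)$ queries). The natural route is an SQ lower bound via the standard statistical dimension / correlation argument: one builds a large family of distributions $D_1, \dots, D_N$ on $B_X$ whose means are pairwise $\Omega(\eps')$-far in $\|\cdot\|_X$ but which are nearly indistinguishable to any statistical query of tolerance $\eps$ — i.e.\ for every query $h$, only few $D_i$ have $|\Ex_{D_i}[h] - \Ex_{\bar D}[h]|$ large. The witnessing family should come from the vectors appearing in the definition of $T_2(X)$: if $T_2(X)$ is large, there are vectors $x_1, \dots, x_n$ for which $\Ex_\beps \|\sum \beps_i x_i\|_X$ is large relative to $(\sum \|x_i\|_X^2)^{1/2}$, and these sign combinations (suitably normalized to lie in $B_X$) become the means of a product-distribution family indexed by $\beps \in \{-1,1\}^n$; the type-2 failure is exactly what makes the means far apart while the per-query correlations stay small (each query can only ``see'' a bounded amount of the sign pattern, by a Fourier/hypercontractivity argument on $\{-1,1\}^n$). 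Symmetry is used to put these into a clean normalized form and to ensure the hard instance genuinely lives inside $B_X$.

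The main obstacle I expect is making the two directions meet \emph{exactly} at the threshold $T_2(X) \le \poly(\log d)$, rather than merely at $T_2(X) \le \poly(\log d)$ versus $T_2(X) \ge d^{\Omega(1)}$. The upper bound's dependence on $T_2$ is plausibly clean (the algorithm's error is roughly $\eps \cdot T_2(X) \cdot \poly(\log d)$ after rotation and boosting), but the lower bound is the delicate part: from a single large value of $T_2(X)$ one must extract an instance family that is simultaneously (i) contained in $B_X$, (ii) has pairwise mean-distance $\gtrsim \eps \cdot T_2(X)$, and (iii) has small SQ statistical dimension, i.e.\ resists all $\poly(d)$-size query sets at tolerance $\eps$. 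Getting (iii) quantitatively right likely requires a careful choice of the product distributions (e.g.\ scaling each coordinate so that individual queries have correlation $O(\eps^2)$) together with a combinatorial/Fourier-analytic argument bounding, for each $h$, the number of ``heavy'' $D_i$; controlling the interaction between the geometry of a general symmetric norm and this combinatorics is where the real work lies. A secondary subtlety is handling the regime where $d$ itself is the bottleneck versus $\eps$, and ensuring the tolerance is taken to be $\eps$ (not $\eps/\poly(d)$) so that the lower bound is meaningful for the notion of tractability defined in the excerpt.
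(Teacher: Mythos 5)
Your plan for the lower-bound direction is essentially the paper's: take witness vectors $x_1,\dots,x_n$ nearly achieving $T_2(X)$, build a reference distribution on $\pm x_i/\|x_i\|_X$ and a perturbed family $\{D_z\}_{z\in\{-1,1\}^n}$ whose means are $\propto \sum_i z_i x_i$, use Khintchine--Kahane and Paley--Zygmund to show these means are typically $\Omega(\eps_0)$-far in $\|\cdot\|_X$, and then control the statistical dimension. One detail you omit, which the paper needs (Lemma~\ref{lem:type-2}), is a preprocessing step that \emph{rebalances} the witness sequence so that every $\|x_i\|_X\in[1,2]$; without this the per-query tail bound degenerates. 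Also, the paper's concentration step is just a Hoeffding bound over $z\in\{-1,1\}^n$, not a hypercontractivity argument --- the Fourier/hypercontractivity machinery is reserved for the Schatten-$p$ lower bound, where there is no type-2 witness and the queries really can ``see'' high-degree structure.

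The upper-bound direction as you describe it has a genuine gap. You propose random rotations plus coordinatewise estimation plus geometric boosting over $O(\log(1/\eps))$ rounds, but you never explain how an $\ell_\infty$ or $\ell_2$ bound on the error vector gets converted into an $\|\cdot\|_X$ bound, and a single rotation-and-estimate step does not do this for a general symmetric norm. The paper's actual engine is a structural interpolation lemma (Lemma~\ref{lem:inter}): if $\|v\|_\infty\le t$ and $\|v\|_2\le m_X(t)$ (where $m_X(t)=t\sqrt{\ell_X(t)}$ and $\ell_X(t)$ is the largest $k$ with $\|(t,\dots,t,0,\dots)\|_X\le 1$), then $\|v\|_X\le 3T_2(X)\log d$. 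Proving this uses a dyadic level-set decomposition of $v$ and, within each level, a partition of the support into blocks of size $\le \ell_X(t)$ to which the type-2 inequality is applied with the blocks as the $x_i$'s --- not standard basis vectors, which is where your sketch misreads how $T_2(X)$ enters. The algorithm then decomposes the input distribution into $O(\log(d/\eps))$ ``level-$j$ ring'' sub-distributions, scales each so it lives inside $B_{\ell_\infty}$ and $B_{\ell_2}$ simultaneously, runs the \emph{known} $\ell_\infty$ and $\ell_2$ mean estimators of~\cite{FGV17} on each, finds a vector consistent with both estimates, applies the interpolation lemma to conclude the $X$-norm error per ring is $O(\eps/\log(d/\eps))$, and sums; this is a one-shot decomposition into scales, not an iterative error-reduction loop. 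Without the interpolation lemma and the ring decomposition, your outline has no way to relate coordinatewise ($\ell_\infty$/$\ell_2$) error to $\|\cdot\|_X$ error in terms of $T_2(X)$, which is the heart of the positive direction.
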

Theorem~\ref{our_symm_intro} easily implies Theorem~\ref{vitaly_thm_intro}, since for $1 \leq p < 2$,
$T_2(\ell_p) = d^{\Omega_p(1)}$, while for $2 \leq p < \infty$ one has $T_2(\ell_p) \leq \sqrt{p - 1}$
and $T_2(\ell_\infty) \leq O(\sqrt{\log d})$~(\cite{BCL94}). For a quantitative version of Theorem~\ref{our_symm_intro},
see Theorem~\ref{thm:sym-norm-ub} and Theorem~\ref{thm:symm-lb}.

\paragraph{Schatten-$p$ norms.} Recall that for a matrix $M$, the Schatten-$p$ norm of $M$ is the $\ell_p$ norm of the singular values of $M$.
In particular, the Schatten-$\infty$ norm of $M$ is simply the spectral norm of $M$, and the Schatten-$2$ norm corresponds to the Frobenius norm.
Such norms are very well-studied and arise naturally in many applications in learning and probability theory.
Our second main result settles the tractability of Schatten-$p$ norms, resolving a question of~\cite{FGV17}.
\begin{theorem}
    \label{our_schatten_intro}
    The Schatten-$p$ norm is tractable iff $p = 2$.
\end{theorem}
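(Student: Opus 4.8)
The plan is to treat the three regimes $p=2$, $1\le p<2$, and $2<p\le\infty$ separately (throughout, $d=n^2$ for $n\times n$ matrices). For $p=2$ there is nothing to do: the Schatten-$2$ norm is the Euclidean norm on $\R^{n^2}$, so tractability is immediate from Theorem~\ref{vitaly_thm_intro} with $p=2$ (equivalently, from Theorem~\ref{our_symm_intro} together with $T_2(\ell_2)=1$).

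For $1\le p<2$ I would reduce to the non-tractability of $\ell_p^n$. The subspace of diagonal matrices, with the Schatten-$p$ norm, is isometric to $\ell_p^n$; so from a distribution $D$ on $B_{\ell_p^n}$ form the distribution $\tilde D$ of $\mathrm{diag}(\bx)$ with $\bx\sim D$, which is supported on $B_{S_p^n}$ and has mean $\mathrm{diag}(\mu(D))$. Any statistical query $h$ for $\tilde D$ is simulated by the query $x\mapsto h(\mathrm{diag}(x))$ for $D$ with the same tolerance, so a mean-estimation algorithm for $S_p^n$ induces one for $\tilde D$; and if $\widehat M$ is $\varepsilon'$-close to $\mathrm{diag}(\mu(D))$ in $S_p$, then its diagonal part $\widehat x$ is $\varepsilon'$-close to $\mu(D)$ in $\ell_p^n$, because killing the off-diagonal entries is the pinching map $M\mapsto \Ex_{\beps\sim\{-1,1\}^n}[\mathrm{diag}(\beps)\,M\,\mathrm{diag}(\beps)]$, an average of unitary conjugations and hence a contraction in every Schatten norm, whence $\|\widehat x-\mu(D)\|_{\ell_p^n}=\|\mathrm{diag}(\widehat x-\mu(D))\|_{S_p}\le\|\widehat M-\mathrm{diag}(\mu(D))\|_{S_p}\le\varepsilon'$. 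Since $\ell_p^n$ is not tractable for $p<2$ (indeed $\varepsilon'\ge\varepsilon\,n^{\Omega_p(1)}=\varepsilon\,d^{\Omega_p(1)}$, by Theorem~\ref{vitaly_thm_intro} or Theorem~\ref{thm:symm-lb}), neither is $S_p^n$.

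The case $2<p\le\infty$ is the heart of the matter, and I expect the statistical-query lower bound here to be the main obstacle. No reduction to a symmetric norm is available — $S_p^n$ is not symmetric — and in fact $T_2(S_p^n)=O(\sqrt p)$ for $p<\infty$ and $T_2(S_\infty^n)=\mathrm{poly}(\log d)$, both small, so Theorem~\ref{our_symm_intro}/Theorem~\ref{thm:symm-lb} cannot be used and a genuinely matrix-structured argument is required. I would construct a large family $\{D_M\}$ of distributions supported on $B_{S_p^n}$, indexed by a hidden low-rank/orthogonal object $M$, so that: (i) each sample $X\sim D_M$ reveals $M$ only through many entries jointly, so that every query $h\colon\R^{n\times n}\to[-1,1]$ — including the entrywise sign queries $X\mapsto\mathrm{sign}(X_{ij})$ and all non-linear, adaptive queries — satisfies $\Ex_M[(\Ex_{X\sim D_M}[h(X)]-\Ex_{X\sim\bar D}[h(X)])^2]$ super-polynomially small in $d$, for a suitable reference $\bar D$; and (ii) the means $\mu_M$ aggregate the hidden object into matrices with $\|\mu_M-\mu_{M'}\|_{S_p}\ge\varepsilon\,n^{\Omega(1)}$ (morally matching the trivial entrywise upper bound $\varepsilon'\lesssim\varepsilon\,n$). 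Feeding this into the statistical-dimension machinery of~\cite{FGV17} then shows that $\mathrm{poly}(d)$ queries of tolerance $\varepsilon$ cannot identify $M$, hence $\varepsilon'\ge\varepsilon\,n^{\Omega(1)}=\varepsilon\,d^{\Omega(1)}$ and $S_p^n$ is not tractable. The delicate step is reconciling (i) and (ii): the realizability constraint $\|X\|_{S_p}\le1$ caps the per-entry signal, and the single-entry queries $X\mapsto\mathrm{sign}(X_{ij})$ force the dependence of $D_M$ on $M$ to be high-degree / parity-like, which pulls against wanting the aggregated mean to stay Schatten-large; one must then carry out the statistical-dimension estimate for the matrix family, for which I would expect to need noncommutative-Khintchine-type bounds (or an explicit Fourier-analytic analysis of a planted block or planted rotation) to control how strongly a single bounded query can correlate with many hidden objects at once.
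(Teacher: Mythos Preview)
Your handling of $p=2$ matches the paper. For $1\le p<2$, your reduction via the diagonal subspace and the pinching contraction is valid; the paper instead invokes its type-$2$ lower bound (Theorem~\ref{thm:symm-lb}, which is stated for \emph{arbitrary} normed spaces, not only symmetric ones) directly on $S_p$, using $T_2(S_p^n)=n^{\Omega_p(1)}$. Both routes work; yours is more self-contained, the paper's avoids the pinching argument and reuses the machinery already built.

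For $2<p\le\infty$ there is a genuine gap: you give a wish list for the hard family and a plausible toolbox, but no construction and no proof. The paper's instance is quite specific and not suggested by your outline. The reference distribution $D$ outputs a \emph{uniformly random signed permutation matrix} scaled by $d^{-1/p}$ (so every sample has unit $S_p$ norm); each hard distribution $D_{a,b}$, indexed by $a,b\in\{-1,1\}^d$, keeps the random permutation $\pi$ but biases the sign in position $(i,\pi(i))$ toward $a_i b_{\pi(i)}$. The mean is $\mu_{a,b}=(\eps/d)\,ab^{\intercal}$, rank one with $\|\mu_{a,b}\|_{S_p}=\eps$, so distinguishing $D$ from $\calD=\{D_{a,b}\}$ is necessary for mean estimation. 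The key analytical tool is \emph{not} noncommutative Khintchine but scalar Boolean hypercontractivity: after averaging over $\pi$, the bias $H_h(a,b)=\Ex_{D_{a,b}}[h]-\Ex_D[h]$ is a polynomial on $\{-1,1\}^{2d}$ whose degree-$t$ Fourier coefficients carry a factor $(\eps d^{1/p})^t$; writing $H_h=\cT_\rho g$ and applying the $(2,q)$-hypercontractive inequality, together with a combinatorial count of permutations sending $S$ to $T$, yields sub-Gaussian tails for $H_h(\ba,\bb)$ at scale $\eps/d^{1/2-1/p}$. This gives the sharp $\eps'\gtrsim\eps\,d^{1/2-1/p}$ (your $n^{1-2/p}$), matching the trivial $\ell_2$-embedding upper bound. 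Your ``high-degree/parity-like'' intuition is on target, but the permutation-matrix construction and the hypercontractivity computation are the substance of the argument, and without them the $p>2$ case is not proved.
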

For a quantitative version of Theorem~\ref{our_schatten_intro}, see Theorem~\ref{schatten_main}. 
Theorem~\ref{our_schatten_intro} shows that one cannot remove ``symmetric'' from Theorem~\ref{our_symm_intro}, since type-$2$ constants of Schatten-$p$ spaces are essentially the same as for the corresponding $\ell_p$ spaces~(\cite{BCL94}).
Specifically, for $p > 2$, Schatten-$p$ spaces have small type-$2$ constant, but are intractable. 
In particular, we show that the best
mean estimation algorithm for Schatten-$p$ can be obtained by embedding the space into $\ell_2$ (via the identity
map) and then using the $\ell_2$ estimation algorithm from~\cite{FGV17}.

\subsection{Techniques}

The main technical tool underlying the algorithm for mean estimation in symmetric norms is the following geometric statement. For any symmetric norm $(\R^d, \|\cdot\|_X)$, consider the set $R_j \subset B_{X}$ consisting of the \emph{level-$j$ ring}, i.e., all points $x \in B_{X}$ whose non-zero coordinates have absolute value between $2^{-(j+1)}$ and $2^{-j}$, and consider the smallest radius $r > 0$ where $R_j \subset r B_{\ell_2}$. Then,
\begin{align} 
R_j \subset r B_{\ell_2} \cap 2^{-j} B_{\ell_{\infty}} \subset (3 T_2(X) \log d) B_X. \label{eq:inclusion}
\end{align}
Given the above geometric statement, which generalizes the similar statement for $\ell_p$ norms from \cite{FGV17}, we generalize the algorithm from \cite{FGV17} to the symmetric norms setting. Specifically, we divide the distribution into $\log(d/\eps)$ distributions, each lying on a \emph{level-$j$ ring} of $B_X$, so that the sum of the estimates of the $\log(d/\eps)$ distributions is a good estimate for the original distribution. By the first inclusion in (\ref{eq:inclusion}), we may use the mean estimation algorithms for $\ell_{\infty}$ and $\ell_2$ on each ring after an appropriate scaling with error $\eps$. Running these two algorithms, we can get an approximation to mean of the distribution on the ring up to error $O(\eps 2^{-j})$ in $\ell_{\infty}$ and $\eps r$ in $\ell_2$. Via the second inclusion in (\ref{eq:inclusion}), this will be a good estimate in $X$ provided $T_2(X)$ is small.

The lower bound for norms with large type-$2$ constants is a generalization of the result in~\cite{FGV17}; in particular, the hard distributions for $\ell_p$ from \cite{FGV17} are supported on basis vectors, which are exactly those achieving $T_2(\ell_p)$ in (\ref{eq:type}). For general norms $X$, we consider the analogous distributions supported on an arbitrary set of vectors achieving $T_2(X)$ in (\ref{eq:type}); however, the fact that we have much less control on the vectors necessitates additional care.

The Schatten-$p$ norms, for $p > 2$, do satisfy $T_2(S_p) \leq \sqrt{\log d}$, so new ideas are required in proving the lower bound. We show the lower bound for carefully crafted hard distributions, using hypercontractivity to show concentration of the result of an arbitrary statistical query.

\section{Preliminaries}

Here we introduce some basic notions about normed spaces and
statistical algorithms. 
We will use boldfaced letters for random variables, and the notation $\beps \sim \{-1,1\}^n$ will mean that
$\beps$ is a random vector chosen uniformly from $\{-1,1\}^n$. 

\begin{definition}
For any vector $x \in \R^d$, we let $|x|$ be the vector $x$ with each
coordinate replaced by its absolute value, and let $x^* = P|x|$ be the
vector obtained by applying the permutation matrix $P$ to $|x|$ which
sorts coordinates of $|x|$ by order of non-increasing value. A normed
space $X = (\R^d, \|\cdot\|_X)$ is symmetric if $\|x\|_X = \|x^*\|_X$
holds for every $x \in \R^d$. 
\end{definition}

We recall that $\ell_p^d$ is the normed space over $\R^d$ with the
norm of a vector $x$ given by $\|x\|_p = (|x_1|^p + \ldots +
|x_d|^p)^{1/p}$. The Schatten-$p$ space $S^d_p = (\R^d, \|\cdot\|_{S_p})$ is
defined over $d\times d$ matrices with real entries, and the norm of a
matrix is defined as the $\ell^d_p$ norm of its singular values. We
omit the superscript $d$ and just write $\ell_p$ and $S_p$
when this does not cause confusion.

For a normed space $X = (\R^d, \|\cdot\|_X)$, let $B_X = \{ x \in \R^d : \|x\|_X \leq 1\}$ be the unit ball of the norm $X$. Furthermore, for $p \in [1, \infty)$, we let $L_p(X) = (\R^{dn}, \|\cdot\|_{L_p(X)})$ be the normed space over sequences of vectors $x = (x_1, \dots, x_n) \in \R^{d n}$ where $\|x\|_{L_p(X)} = ( \sum_{i=1}^n \|x_i\|_X^p )^{1/p}.$

Next we define the type of a normed space.
\begin{definition}
Let $X = (\R^d, \|\cdot\|_X)$ be a normed space, $n \in \N$, and $p \in [1, 2]$. Let $T_p(X, n)$ be the infimum over $T > 0$ such that:
\[ \left( \Ex_{\beps \sim \{-1,1\}^n} \left[ \left\| \sum_{i=1}^n \beps_i x_i \right\|_X^2 \right]\right)^{1/2} \leq T \left( \sum_{i=1}^n \|x_i\|_X^p \right)^{1/p}, \]
for all $x_1, \dots, x_n \in \R^d$. We let
$T_p(X) = \sup_{n \in \N} T_p(X, n)$, and say $X$ has \emph{type} $p$ with constant $T_p(X)$.
\end{definition}
Note that, by the parallelogram identity, the Euclidean space $(\R^d,
\|\cdot\|_2)$ has type $2$ with constant $1$, and in fact the
inequality becomes an equality. Together with John's theorem, this
implies that any $d$-dimensional normed space has type $2$ with
constant at most $\sqrt{d}$. However, we are typically interested in
spaces that have type $p$ with constant independent of dimension. It
follows from the results in~\cite{BCL94} that for $p \ge 2$,
$\ell^d_p$ has type $2$ with constant $\sqrt{p-1}$, and
for $1 \le p < 2$, $\ell_p^d$ has type $p$ with constant
$1$; at the same time, considering the standard basis of $\R^d$ shows
that for $1 \le p < q \le 2$, the type $q$ constant of $\ell_p^d$ goes
to infinity with the dimension $d$. Moreover, these results also hold
for Schatten-$p$ spaces.

For a normed space $X = (\R^d, \|\cdot\|_X)$, let $B_X = \{ x \in \R^d : \|x\|_X \leq 1\}$ be the unit ball of the norm $X$. Furthermore, for $p \in [1, \infty)$, we let $L_p(X)$ be the normed space over sequences of vectors $x = (x_1, \dots, x_n) \in \R^{d \cdot n}$ where $\|x\|_{L_p(X)} = ( \sum_{i=1}^n \|x_i\|_X^p)^{1/p}$.

\newcommand{\SDN}{\mathrm{SDN}}
\newcommand{\VSTAT}{\mathrm{VSTAT}}

Finally, we define formally statistical algorithms and the $\STAT$ and
$\VSTAT$ oracles. We follow the definitions from~\cite{FGRVX13}.
\begin{definition}
  Let $D$ be a distribution supported on $\Omega$. For a tolerance
  parameter $\tau>0$, the oracle $\STAT(\tau)$ takes a query function
  $h\colon\Omega \to [-1,1]$, and returns some value $v \in \R$ satisfying
  $|v - \Ex_{\bx\sim D}[h(\bx)]|\leq \tau$.
  For a sample size parameter $t>0$, the $\VSTAT(t)$ oracle takes a
  query function $h\colon\Omega\to[0,1]$ and returns some value $v \in \R$
  %satisfying \eqref{eq:sq-defn}, where 
  such that $|v - p| \leq\tau$, for $p=\Ex_{\bx\sim D}[h(\bx)]$, and
  $\tau = \max\{1/t, \sqrt{p(1-p)/t}\}$.
  
  We call an algorithm that accesses the distribution $D$ only via one
  of the above oracles a \emph{statistical algorithm}.
\end{definition}
Clearly, $\VSTAT(t)$ is at least as strong as $\STAT(1/\sqrt{t})$ and no
stronger than $\STAT(1/t)$. 
The lower bounds presented will follow the framework of \cite{FPV18}.
\begin{definition}\label{def:stats}
The discrimination norm $\kappa_2(D, \calD)$ for a distribution $D$ supported on $\Omega$ and a set $\calD$ of distributions supported on $\Omega$ is given by:
\[ \kappa_2(D, \calD) = \max_{\substack{h \colon \Omega \to \R \\ \|h\|_{D} = 1}} \left\{ \Ex_{\bD \sim \calD}\left[ \left| \Ex_{\bx \sim D}[h(\bx)] - \Ex_{\bx \sim \bD}[h(\bx)] \right| \right] \right\}, \]
where $\bD \sim \calD$ is sampled uniformly at random, and $\|h\|_{D}^2 = \Ex_{\by \sim D}[h(\by)^2]$. The decision problem $\calB(D, \calD)$ is the problem of distinguishing whether an unknown distribution $\bH = D$ or is sampled uniformly from $\calD$. The statistical dimension with discrimination norm $\kappa$, $\SDN(\calB(D, \calD), \kappa)$, is the largest integer $t$ such that for a finite subset $\calD_D \subset \calD$, any subset $\calD' \subset \calD_D$ of size at least $|\calD_D|/t$ satisfies $\kappa_2(\calD', D) \leq \kappa$.
\end{definition}

\begin{theorem}[Theorem 7.1 in \cite{FPV18}]\label{thm:fpv}
For $\kappa > 0$, let $t = \SDN(\calB(D, \calD), \kappa)$ for a distribution $D$ and set of distributions $\calD$ supported on a domain $\Omega$. Any randomized statistical algorithm that solves $\calB(D, \calD)$ with probability at least $2/3$ requires $t/3$ calls to $\VSTAT(1/(3\kappa^2))$.
\end{theorem}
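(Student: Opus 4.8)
\medskip
\noindent\emph{Proof plan.} I would run the standard statistical-dimension argument. Fix a finite $\calD_D \subseteq \calD$ witnessing $\SDN(\calB(D,\calD),\kappa) \geq t$, put $m = |\calD_D|$, and recall that every $\calD' \subseteq \calD_D$ with $|\calD'| \geq m/t$ then has $\kappa_2(\calD',D) \leq \kappa$; as is standard in this framework I would assume every member of $\calD$ has a density with respect to $D$. Suppose toward a contradiction that a randomized statistical algorithm $\mathcal A$ solves $\calB(D,\calD)$ with probability $\geq 2/3$ using only $q < t/3$ queries to $\VSTAT(1/(3\kappa^2))$. The plan has three parts: (i) extract from $\mathcal A$ one run whose $q$ queries jointly ``distinguish'' at least $m/3$ members of $\calD_D$; (ii) apply pigeonhole to obtain a single query distinguishing more than $m/t$ of them; (iii) turn that query into a test function certifying $\kappa_2 > \kappa$, contradicting the choice of $\calD_D$.

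For (i), call a query $h\colon \Omega \to [0,1]$ \emph{distinguishing for} $D'$ if $|\Ex_D[h] - \Ex_{D'}[h]| > \tau_{D',h}$, where $\tau_{D',h} = \max\{3\kappa^2,\ \kappa\sqrt{3\,p'(1-p')}\}$ with $p' = \Ex_{D'}[h]$ is precisely the $\VSTAT(1/(3\kappa^2))$ tolerance at $D'$. I would run $\mathcal A$ against the oracle $O_D$ that answers every $h$ with $\Ex_D[h]$ (valid on input $D$): since $\mathcal A$ is correct on $D$ with probability $\geq 2/3$, it outputs ``$=D$'' with probability $\geq 2/3$ over its coins against $O_D$. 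Let $G$ be that set of coins; for $\rho \in G$ the run is a fixed query sequence $h^\rho_1,\dots,h^\rho_q$, and let $\calD_\rho \subseteq \calD_D$ collect the $D'$ for which some $h^\rho_i$ is distinguishing. For each $D'$, define the valid-for-$D'$ oracle $O_{D'}$ answering $h$ with $\Ex_D[h]$ unless $h$ is distinguishing for $D'$, in which case with $\Ex_{D'}[h]$ (error $0$); an induction along the adaptive run shows that if $D' \notin \calD_\rho$ then $\mathcal A$ on coins $\rho$ against $O_{D'}$ reproduces the same transcript, hence the (now wrong) answer ``$=D$''. Since $\mathcal A$ errs on a uniformly random $D' \in \calD_D$ with probability $\leq 1/3$, this gives $\Ex_\rho[\mathbb 1[\rho \in G]\,(1 - |\calD_\rho|/m)] \leq 1/3$, hence $\Ex_\rho[\mathbb 1[\rho \in G]\,|\calD_\rho|] \geq m(\Pr[G] - 1/3) \geq m/3$, so some fixed $\rho^\star \in G$ has $|\calD_{\rho^\star}| \geq m/3$. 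Pigeonhole over its $q$ queries then produces a single query $h$ distinguishing a set $\calD^\star \subseteq \calD_D$ with $|\calD^\star| \geq m/(3q) > m/t$.

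For (iii), set $p = \Ex_D[h]$ and $\sigma = \|h - p\|_D$, so that $\sigma^2 = \mathrm{Var}_D(h) \leq p(1-p)$ (as $h \in [0,1]$) and $\sigma > 0$ (by the density assumption, since $h$ distinguishes some $D'$). Then $g = (h-p)/\sigma$ has $\|g\|_D = 1$ and $\Ex_D[g] = 0$, so
\[
 \kappa_2(\calD^\star, D) \geq \Ex_{\bD \sim \calD^\star}\bigl[\,|\Ex_D[g] - \Ex_{\bD}[g]|\,\bigr] = \frac{1}{\sigma}\,\Ex_{\bD \sim \calD^\star}\bigl[\,|\Ex_D[h] - \Ex_{\bD}[h]|\,\bigr],
\]
and it remains to show $|\Ex_D[h] - \Ex_{D'}[h]| > \kappa\sqrt{p(1-p)} \geq \kappa\sigma$ for every $D' \in \calD^\star$, which forces the right-hand side above $\kappa$ and yields the contradiction with $|\calD^\star| > m/t$. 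I would prove this bound by a short (case-heavy) calculation comparing $p' = \Ex_{D'}[h]$ to $p$ — WLOG $p \leq 1/2$ after possibly swapping $h$ for $1-h$, which preserves all relevant quantities: when $p \leq 9\kappa^2$ or $p'$ is far from $p$, the ``$3\kappa^2$'' branch of $\tau_{D',h}$, respectively the bare gap $|p-p'|$, already dominates $\kappa\sqrt{p(1-p)}$; and the tight case is $p/2 < p' < p$, where $p'(1-p') \geq 3p/8$ gives $|p-p'| > \kappa\sqrt{9p/8} \geq \kappa\sqrt{p(1-p)}$ — precisely the place where the tolerance $1/(3\kappa^2)$, rather than something weaker, is needed.

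The hard part will be the probabilistic bookkeeping of (i): making each $O_{D'}$ a genuinely valid $\VSTAT(1/(3\kappa^2))$ oracle, handling adaptivity so that the runs of $\mathcal A$ against $O_D$ and against $O_{D'}$ provably agree until the first distinguishing query, and converting the two separate ``$\geq 2/3$'' success guarantees into a single deterministic run distinguishing a constant fraction of $\calD_D$. The case analysis in (iii) is elementary but must track the specific constants — the $1/(3\kappa^2)$ tolerance and the resulting $t/3$ — carefully.
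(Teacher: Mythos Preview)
The paper does not prove this statement at all: Theorem~\ref{thm:fpv} is quoted verbatim from \cite{FPV18} and used as a black box, with no proof or sketch given. There is therefore nothing in the paper to compare your proposal against.

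For what it is worth, your sketch is essentially the argument one finds in the source paper: simulate the randomized algorithm against the honest oracle for the reference distribution $D$, use averaging over the internal randomness together with the $2/3$ success guarantees on both $D$ and a random $D' \in \calD_D$ to extract a single deterministic transcript that distinguishes a constant fraction of $\calD_D$, pigeonhole down to one query $h$ distinguishing more than a $1/t$ fraction, and then normalize $h$ to a unit-$\|\cdot\|_D$ test function to contradict $\kappa_2 \le \kappa$. The case analysis you describe in (iii), passing from the $\VSTAT$ tolerance at $p' = \Ex_{D'}[h]$ to the bound $|p-p'| > \kappa\sqrt{p(1-p)}$ at $p = \Ex_D[h]$, is exactly the step in \cite{FPV18} where the constants $1/(3\kappa^2)$ and $t/3$ are tuned to make the arithmetic close; your outline of it is faithful to the original.
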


% !TEX root = main.tex

\section{Symmetric norms}

\subsection{Mean estimation using SQ for type-2 symmetric norms}

\begin{definition}
Let $X = (\R^d, \|\cdot\|_X)$ be any symmetric norm with $\|e_1\|_X = 1$. Let $\ell_X \colon (0, 1] \to \{0, 1, \dots, d\}$ be the maximum number of coordinates set to $t$ in a vector within the unit ball of $X$, i.e.,
\[ \ell_X(t) = \max\left\{ k : \|(\underbrace{t, \dots, t}_{k}, 0, \dots, 0) \|_{X} \leq 1\right\}, \]
and $m_X \colon (0, 1] \to \R^{\geq 0}$ be the maximum $\ell_2$ norm of a vector within the unit ball of $X$ with coordinates set to $t$, i.e.,
\[ m_X(t) = \max\left\{ \|x\|_2 : x = (\underbrace{t, \dots, t}_{\leq \ell_X(t)}, 0, \dots, 0) \right\}. \]
\end{definition}

The following is the main lemma needed for the statistical query algorithm for type-2 symmetric norms. The lemma is a generalization of Lemma 3.12 from \cite{FGV17} from $\ell_p$ norms (with $p > 2$) to arbitrary type-$2$ symmetric norms. The lemma bounds the norm in $X$ of an arbitrary vector $x$, given corresponding bounds on the $\|x\|_{\infty}$ and $\|x\|_2$. 

\begin{lemma}\label{lem:inter}
Let $X = (\R^d, \|\cdot\|_X)$ be a symmetric norm with type-$2$ constant $T_2(X) \in [1, \infty)$. Fix any $t \in (0, 1]$, and let $x \in \R^d$ satisfy $\|x\|_{\infty} \leq t$ and $\|x\|_2 \leq m_X(t)$. Then, $\|x\|_X \leq T_2(X) \cdot 3\log d$.
\end{lemma}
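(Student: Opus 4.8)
The plan is to bound $\|x\|_X$ by decomposing $x$ according to a dyadic partition of the coordinate \emph{indices} of the sorted vector (into $O(\log d)$ pieces), and to show that each piece has $X$‑norm $O(T_2(X))$. Since $X$ is symmetric we may assume $x = x^*$, i.e.\ $x_1 \ge x_2 \ge \dots \ge x_d \ge 0$. Write $L = \ell_X(t)$; then $L \ge 1$ (as $\|t e_1\|_X = t \le 1$), $m_X(t) = t\sqrt{L}$, and $\|t\cdot\mathbf{1}_L\|_X \le 1$ by definition of $\ell_X$, where $\mathbf{1}_k$ denotes the indicator of $k$ coordinates. For $l = 0,1,\dots,\lfloor\log_2 d\rfloor$ let $x^{[l]}$ be the restriction of $x$ to the coordinates with index in $[2^l,2^{l+1})$, so that $x = \sum_l x^{[l]}$ and $\|x\|_X \le \sum_l \|x^{[l]}\|_X$.

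The first step is a pointwise bound on each block. The largest coordinate of $x^{[l]}$ is $x_{2^l}$, and since $x$ is sorted, $2^l x_{2^l}^2 \le \sum_{i\le 2^l} x_i^2 \le \|x\|_2^2 \le m_X(t)^2 = t^2 L$; together with $x_{2^l}\le\|x\|_\infty\le t$ this gives $x_{2^l} \le v_l := t\min(1,\sqrt{L/2^l})$. Hence $x^{[l]}$ is coordinatewise dominated (after reindexing) by $v_l\,\mathbf{1}_{2^l}$, so monotonicity of symmetric norms yields $\|x^{[l]}\|_X \le v_l\,\|\mathbf{1}_{2^l}\|_X$.

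The second step is the bound $\|\mathbf{1}_k\|_X \le \frac{T_2(X)}{t}\sqrt{\lceil k/L\rceil}$ for every $k\ge 1$, which is where type $2$ enters. If $k \le L$ this follows from $\|\mathbf{1}_k\|_X \le \|\mathbf{1}_L\|_X \le 1/t$ and $T_2(X)\ge 1$. If $k > L$, partition $\mathbf{1}_k$ into $N = \lceil k/L\rceil$ vectors $b_1,\dots,b_N$ of at most $L$ ones each on disjoint supports; then $\|b_i\|_X \le \|\mathbf{1}_L\|_X \le 1/t$, and since $X$ is sign‑invariant and the $b_i$ are disjoint, $\|\mathbf{1}_k\|_X = \|\sum_i \beps_i b_i\|_X$ for every $\beps \in \{-1,1\}^N$, so $\|\mathbf{1}_k\|_X = \big(\Ex_{\beps\sim\{-1,1\}^N}\|\sum_i \beps_i b_i\|_X^2\big)^{1/2} \le T_2(X)\big(\sum_i\|b_i\|_X^2\big)^{1/2} \le T_2(X)\sqrt{N}/t$. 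Taking $k = 2^l$: when $2^l\le L$ we get $v_l = t$ and $\|x^{[l]}\|_X \le t\cdot T_2(X)/t = T_2(X)$; when $2^l > L$ we have $v_l = t\sqrt{L/2^l}$ and $\lceil 2^l/L\rceil \le 2\cdot 2^l/L$, so $\|x^{[l]}\|_X \le t\sqrt{L/2^l}\cdot T_2(X)\sqrt{2\cdot 2^l/L}/t = \sqrt{2}\,T_2(X)$. In all cases $\|x^{[l]}\|_X \le \sqrt{2}\,T_2(X)$.

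Summing, $\|x\|_X \le \sum_{l=0}^{\lfloor\log_2 d\rfloor}\|x^{[l]}\|_X \le \sqrt{2}\,T_2(X)\,(\lfloor\log_2 d\rfloor+1) \le 3T_2(X)\log d$ for $d\ge 2$ (the case $d=1$ being trivial), which is the claim. The step requiring the most care is choosing the decomposition: partitioning $x$ by coordinate \emph{value} (the dyadic level sets used for $\ell_p$ in \cite{FGV17}) forces one to control $\ell_X$ at every scale simultaneously and only produces bounds of order $T_2(X)\sqrt{d/\ell_X(t)}$, which are far too weak for general symmetric norms; partitioning by \emph{index} instead lets the constraint $\|x\|_2 \le m_X(t)$ cap the magnitude on each block uniformly, after which a single application of type $2$ to the all‑ones vector on a block finishes the argument.
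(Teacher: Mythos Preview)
Your proof is correct and the argument is clean. It differs from the paper's in the choice of decomposition: the paper partitions $x$ by dyadic \emph{value} levels $B_j(x)=\{i: t2^{-j-1}<|x_i|\le t2^{-j}\}$ for $j=0,\dots,2\log d$, whereas you partition by dyadic \emph{index} blocks $[2^l,2^{l+1})$ after sorting. Both proofs then feed the same key step --- splitting a constant vector on $k$ coordinates into $\lceil k/\ell_X(t)\rceil$ disjoint pieces of size $\le \ell_X(t)$ and applying type~$2$ --- to show each piece has $X$-norm $\le T_2(X)$. Your route uses only $\lfloor\log_2 d\rfloor+1$ pieces instead of $2\log d+1$ and avoids the residual $t/d$ tail term, giving a slightly better constant.

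One remark: your closing paragraph asserts that value-based partitioning ``only produces bounds of order $T_2(X)\sqrt{d/\ell_X(t)}$'' and is ``far too weak.'' This is not so. The paper's proof shows exactly how to make the value-based decomposition work: from $\|x\|_2\le m_X(t)$ one gets $|B_j(x)|\le 2^{2j}\ell_X(t)$, and since the entries on $B_j(x)$ have magnitude at most $t2^{-j}$, splitting into $2^{2j}$ blocks of size $\le\ell_X(t)$ gives pieces of $X$-norm $\le 2^{-j}$, whence $\|x^{(j)}\|_X\le T_2(X)\cdot 2^j\cdot 2^{-j}=T_2(X)$. So both decompositions succeed for the same underlying reason; your index-based version is just a bit more economical.
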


\begin{proof}
Given the vector $x \in \R^d$, consider the sets $B_j(x) \subset [d]$ for $j \in \{ 0, \dots, 2\log(d)\}$ given by
\[ B_j(x) = \left\{ i \in [d] : t \cdot 2^{-j-1} < |x_i| \leq t \cdot 2^{-j} \right\}, \]
and let $x^{(j)} \in \R^d$ be the vector given by letting the first
$|B_j(x)|$ coordinates be $t \cdot 2^{-j}$, and the remaining
coordinates be 0. Because $X$ is symmetric with respect to changing
the sign of any coordinate of $x$, the triangle inequality easily
implies that $\|x\|_X$ is monotone with respect to $|x_i|$ for any $i
\in [d]$. Then, by the triangle inequality and the fact that $X$ is symmetric with $\|e_1\|_X = 1$, $\|x\|_X \leq \sum_{j = 0}^{2\log(d)} \|x^{(j)}\|_X + t/d$; thus, it remains to bound $\|x^{(j)}\|_X$ for every $j \in \{0, \dots, 2\log(d)\}$. 

We then have
$\sqrt{|B_j(x)|} \cdot t \cdot 2^{-j} = \|x^{(j)}\|_2 \leq m_X(t) \leq t\sqrt{\ell_X(t)}$,
where, in the first inequality, we used the fact that $\|x^{(j)}\|_2
\leq \|x\|_2 \leq m_X(t)$, and, in the second inequality, we used the
definition of $\ell_X(t)$. As a result, we have $|B_j(x)| \leq
\ell_X(t) \cdot 2^{2j}$, so consider partitioning the non-zero
coordinates of $x^{(j)}$ into at most $s=2^{2j}$ groups, each of size
at most $\ell_X(t)$, and let $v_1, \dots, v_{s} \in \R^d$ be the
vectors so $x^{(j)} = \sum_{i=1}^s v_s$. We have
\begin{align*}
\|x^{(j)}\|_X^2 = \Ex_{\beps \sim\{-1,1\}^s}\left[ \left\| \sum_{i=1}^s \beps_i v_i \right\|_X^2 \right] \stackrel{(a)}{\leq} T_2(X)^2 \sum_{i=1}^s \|v_i\|_X^2 \stackrel{(b)}{\leq} T_2(X)^2,
\end{align*}
where the equality uses the symmetry of $X$ with respect to changing
signs of coordinates, the inequality (a) uses the definition of type
constants, and the inequality (b) follows from the definition of $\ell_X(t)$.
 We obtain the desired lemma by summing over all $\|x^{(j)}\|_X$, for $j \in \{0, \dots, 2\log(d)\}$. 
\end{proof}
\noindent
With this structural result, we now show:
\begin{theorem}\label{thm:sym-norm-ub}
Let $X = (\R^d, \|\cdot\|)$ be a symmetric norm with type-2 constant $T_2(X) \in [1, \infty)$ normalized so $\|e_1\|_X = 1$. There exists an algorithm for mean estimation over $X$ making $3d\log d$ queries to $\STAT(\alpha)$, where the accuracy $\alpha$ satisfies 
\[ \alpha = \Omega\left(\frac{\eps}{T_2(X) \cdot \log d \cdot \log(d/\eps))}\right). \]
\end{theorem}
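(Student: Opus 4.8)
The plan is to split $D$ into $O(\log(d/\eps))$ ``rings'' according to the dyadic scale of the coordinates, estimate the contribution of each ring to the mean simultaneously in $\ell_\infty$ and in $\ell_2$, and convert those two guarantees into a guarantee in $X$ via Lemma~\ref{lem:inter}. Concretely, for $x\in B_X$, $j\ge 0$, and $t_j:=2^{-j}$, let $x^{(j)}$ be $x$ restricted to the coordinates $i$ with $|x_i|\in(t_{j+1},t_j]$, set $J:=\lceil\log(d/\eps)\rceil$, and let $x^{>J}$ be $x$ restricted to the coordinates with $|x_i|\le t_{J+1}$, so that $x=\sum_{j=0}^J x^{(j)}+x^{>J}$. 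Since $X$ is symmetric with $\|e_1\|_X=1$, the triangle inequality gives $\|y\|_X\le\|y\|_1$ for every $y$, whence $\|\Ex_{\bx\sim D}[\bx^{>J}]\|_X\le d\,t_{J+1}\le\eps/2$; so it suffices to estimate each $\mu_j:=\Ex_{\bx\sim D}[\bx^{(j)}]$ in $\|\cdot\|_X$ to error $O(\eps/J)$ and output $\widehat\mu=\sum_{j=0}^J\widehat\mu_j$. Because $x^{(j)}$ is a coordinate restriction of $x\in B_X$, monotonicity of $\|\cdot\|_X$ in the absolute values of the coordinates (established inside the proof of Lemma~\ref{lem:inter}) gives $\|x^{(j)}\|_X\le 1$; rounding each nonzero entry of $x^{(j)}$ down to $t_{j+1}$ then shows its support has size at most $\ell_X(t_{j+1})$, so $\|x^{(j)}\|_\infty\le t_j$ and $\|x^{(j)}\|_2\le t_j\sqrt{\ell_X(t_{j+1})}\le 2\,m_X(t_{j+1})$, and the same bounds hold for $\mu_j$, which lies in the convex hull of such vectors.

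\textbf{Estimating one ring.} Fix $j$ and query $h_i(x)=x^{(j)}_i/t_j\in[-1,1]$ for each $i\in[d]$ using $\STAT(\alpha)$; this gives (with $d$ queries) a vector $a^{(j)}$ with $\|a^{(j)}-\mu_j\|_\infty\le\alpha t_j$. If $\ell_X(t_{j+1})=d$, then also $\|a^{(j)}-\mu_j\|_2\le\sqrt d\,\alpha t_j=2\alpha\,m_X(t_{j+1})$, and we set $\widehat\mu_j:=a^{(j)}$. Otherwise --- which happens only for the rings with $t_{j+1}>1/\|(1,\dots,1)\|_X$, of which there are at most $\lceil\log\|(1,\dots,1)\|_X\rceil\le\lceil\log d\rceil$ --- we additionally run the $\ell_2$ mean-estimation algorithm of \cite{FGV17} (see Theorem~\ref{vitaly_thm_intro}) on the rescaled distribution of $\bx^{(j)}/(2\,m_X(t_{j+1}))\in B_{\ell_2}$, obtaining with $d$ further $\STAT(\alpha)$ queries (via a random rotation and truncated coordinate queries) a vector $b^{(j)}$ with $\|b^{(j)}-\mu_j\|_2\le O(\alpha)\cdot m_X(t_{j+1})$, and we set $\widehat\mu_j$ to be the coordinate-wise clipping of $b^{(j)}$ into the box $\{y:\|y-a^{(j)}\|_\infty\le\alpha t_j\}$. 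Since $\mu_j$ lies in that box, the clipping is an $\ell_2$-contraction towards $\mu_j$, so $\|\widehat\mu_j-\mu_j\|_2\le\|b^{(j)}-\mu_j\|_2$, while $\|\widehat\mu_j-\mu_j\|_\infty\le 2\alpha t_j$ by the triangle inequality. In both cases there is a $\delta=O(\alpha)$ (up to lower-order polylog factors from the Euclidean routine) with $\|\widehat\mu_j-\mu_j\|_\infty\le\delta\,t_{j+1}$ and $\|\widehat\mu_j-\mu_j\|_2\le\delta\,m_X(t_{j+1})$, so Lemma~\ref{lem:inter} applied to $(\widehat\mu_j-\mu_j)/\delta$ at scale $t_{j+1}$ yields $\|\widehat\mu_j-\mu_j\|_X\le 3\delta\,T_2(X)\log d$.

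\textbf{Assembly and query count.} Summing over $j$ and adding the tail gives $\|\widehat\mu-\Ex_{\bx\sim D}[\bx]\|_X\le (J+1)\cdot 3\delta\,T_2(X)\log d+\eps/2$, which is at most $\eps$ once $\alpha=\Omega(\eps/(T_2(X)\log d\,\log(d/\eps)))$ (recall $J+1=\Theta(\log(d/\eps))$). The coordinate queries cost $d$ per ring over $J+1=O(\log(d/\eps))$ rings, and the Euclidean subroutine is invoked on at most $\lceil\log d\rceil$ additional rings at cost $d$ each; in the regime $\eps\ge 1/d$ (where $\log(d/\eps)\le 2\log d$) this totals at most $3d\log d$ queries to $\STAT(\alpha)$.

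\textbf{Main obstacle.} The crux is that Lemma~\ref{lem:inter} consumes \emph{one} vector that is accurate in $\ell_\infty$ \emph{and} in $\ell_2$, whereas the two natural estimators --- coordinate-wise queries (good in $\ell_\infty$) and a random rotation (good in $\ell_2$) --- are each good in only one norm; the coordinate-wise clipping of the $\ell_2$-estimate into the $\ell_\infty$-confidence box of the coordinate estimate reconciles them, using that projecting onto a convex set containing $\mu_j$ cannot increase the distance to $\mu_j$. Two further technical points need care: controlling the bias introduced by truncating the rotated coordinates so the Euclidean subroutine's queries stay in $[-1,1]$ (the standard argument bounds the excess mass in expectation over the random rotation and then fixes a good rotation by Markov), and tracking the factor-$2$ slack at the ring boundaries so that Lemma~\ref{lem:inter} is invoked at a scale $t_{j+1}$ at which $\|x^{(j)}\|_2\le O(m_X(t_{j+1}))$ genuinely holds.
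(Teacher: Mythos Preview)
Your proposal is correct and follows essentially the same route as the paper: ring decomposition, simultaneous $\ell_\infty$ and $\ell_2$ estimation on each ring, and conversion to an $X$-norm guarantee via Lemma~\ref{lem:inter}. Your version is slightly more explicit than the paper's in two places---you replace the paper's ``find one vector $w^{(j)}$ in the intersection of the two confidence sets'' by the concrete clipping of the $\ell_2$ estimate into the $\ell_\infty$ box (using that projection onto a convex set containing $\mu_j$ is a contraction), and you observe that the $\ell_2$ subroutine is unnecessary on rings with $\ell_X(t_{j+1})=d$, which trims the query count---but neither of these changes the structure of the argument.
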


\begin{proof}
For $j \in \{0, \dots, 2\log(d/\eps)\}$, and $w \in \R^d$, let $R_j(w)$ be the level $j$ vector of $w$, i.e., $R_j(w) = \sum_{i=1}^n e_i w_i \ind\{ w_i \in (2^{-j-1}, 2^j]\}$. For any fixed distribution $\calD$ supported on the unit ball of $X$, we may consider the distribution $\calD_j$ given by $R_j(\bx)$ where $\bx \sim \calD$. Denote $\mu = \Ex_{\bx \sim \calD}[\bx]$ and $\mu_j = \Ex_{\bx \sim \calD_j}[\bx]$, so that distributions $\calD_j$ satisfy $\| \mu - \sum_j \mu_j \|_X \leq \eps^2/d$. As a result, the sum of $\eps/(3\log(d/\eps))$-approximations of $\mu_j$ would result in an $\eps$-approximation of $\mu$.

The algorithm proceeds by estimating the mean of each distribution $\calD_j$ and then taking the sum of all estimates:
\begin{enumerate}
\item For each $j \in \{0,\dots, 2\log(d/\eps)\}$, we consider $\calH_{\infty}^{(j)}$ as the distribution given by $\bx / 2^{-j}$ where $\bx \sim \calD_j$, and $\calH_{2}^{(j)}$ as the distribution given by $\bx / (2m_X(2^{-j}))$. Note that $\calH_{\infty}^{(j)}$ is supported on $B_{\ell_\infty}$, and $\calH_2^{(j)}$ is supported on $B_{\ell_2}$. 
\begin{itemize}
\item Perform the mean estimation algorithms for $\calH_{\infty}^{(j)}$ and $\calH_{2}^{(j)}$ as given in \cite{FGV17} with error parameter $\eps \gamma$ where $\gamma = 1 / (36 \cdot T_2(X) \cdot \log d\log(d/\eps))$ to obtain vectors $v_{\infty}^{(j)}, v_{2}^{(j)} \in \R^d$, and let $w_{\infty}^{(j)} = 2^{-j} v_{\infty}^{(j)}$ and $w_2^{(j)} = 2 m_X(2^{-j}) v_{2}^{(j)}$ where 
\begin{align} \label{eq:estimates}
\left\| \mu_j - w_{\infty}^{(j)} \right\|_\infty \leq \eps \gamma \cdot 2^{-j} \qquad \text{and}\qquad \left\| \mu_j - w_{2}^{(j)} \right\|_2 \leq 2\eps\gamma \cdot m_X(2^{-j}). 
\end{align} 
\item Find one vector $w^{(j)} \in \R^d$ where $\|w^{(j)} - w_{\infty}^{(j)}\|_{\infty} \leq \eps\gamma 2^{-j}$ and $\|w^{(j)} - w_{2}^{(j)}\|_2 \leq 2\eps\gamma m_X(2^{-t})$, and return $w^{(j)}$ as an estimate for $\mu_j$.
\end{itemize}
\item Given estimates $w^{(j)} \in \R^d$ for all $j \in \{0,\dots, 2\log(d/\eps)\}$, output $\sum_j w^{(j)}$.
\end{enumerate}

We note that the inequalities in (\ref{eq:estimates}) follow from the fact that $v_{\infty}^{(j)}$ and $v_2^{(j)}$ are $\eps \gamma$-approximations for $\Ex_{\bx \sim \calH_{\infty}^{(j)}}[\bx]$ (in $\ell_{\infty}$) and $\Ex_{\bx \sim \calH_{2}^{(j)}}[\bx]$ (in $\ell_2$), respectively, and that 
\[ 2^{-j} \Ex_{\bx \sim \calH_{\infty}^{(j)}}[\bx] = 2m_X(2^{-j}) \Ex_{\bx \sim \calH_2^{(2)}}[\bx] = \mu_j.\] 
In order to see that $w^{(j)}$ is a good estimate for $\mu_j$, let $y_j = \mu_j - w^{(j)}$ be the error vector in the approximation. From the triangle inequality, and the definition of $w^{(j)}$, we have $\|y\|_{\infty} \leq 2\eps\gamma \cdot 2^{-j}$ and $\|y\|_2 \leq 4\eps\gamma \cdot m_X(2^{-j})$, so that Lemma~\ref{lem:inter} implies $\|y\|_X \leq 12\eps \gamma \cdot T_2(X) \log d \leq \eps / (3\log(d/\eps))$.
\end{proof}

\subsection{Lower bounds for normed spaces with large type-2 constants}\label{sec:type-2-lbs}

We now give a lower bound for normed spaces which have large type-2 constant.

\begin{theorem}\label{thm:symm-lb}
Let $X = (\R^d, \|\cdot\|_X)$ be a normed space with type-2 constant $T_2(X) \in [1, \infty)$. There exists an $\eps > 0$ such that any statistical algorithm for mean estimation in $X$ with error $\eps$ making queries to $\VSTAT(1/(3\kappa^2))$ must make 
\[ \exp\left(\Omega\left(\frac{T_2(X)^2 \cdot \kappa^2}{\eps^2 \cdot \log d}\right)\right) \]
 such queries.
\end{theorem}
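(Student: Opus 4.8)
The plan is to instantiate the lower-bound framework of Theorem~\ref{thm:fpv} with a family of hard distributions built from the vectors that witness the type-$2$ constant of $X$. By definition of $T_2(X)$, there exist $n$ and unit-norm-normalized vectors $x_1, \dots, x_n \in \R^d$ with $\sum_i \|x_i\|_X^2$ controlled and $(\Ex_{\beps}\|\sum_i \beps_i x_i\|_X^2)^{1/2} \approx T_2(X) \cdot (\sum_i \|x_i\|_X^2)^{1/2}$; rescaling, we may assume each $\|x_i\|_X \le 1$ so the $x_i$ lie in $B_X$. The reference distribution $D$ will be the point mass at the origin (equivalently, the uniform distribution on $\{x_i, -x_i\}$ with all signs averaged out), and the family $\calD$ will consist of distributions $D_{\beta}$, indexed by sign patterns $\beta \in \{-1,1\}^n$, where $D_\beta$ places mass on the vectors $\beta_i x_i$ in such a way that the mean of $D_\beta$ is a constant multiple of $\sum_i \beta_i x_i$ — for instance, $D_\beta$ uniform on $\{n \beta_i x_i : i \in [n]\}$ so that $\Ex_{\bx \sim D_\beta}[\bx] = \sum_i \beta_i x_i$. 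The point is that for a random $\beta$, the mean $\sum_i \beta_i x_i$ has $\|\cdot\|_X$ typically of order $T_2(X)\sqrt{n}$ (after normalization), while the means for different $\beta$'s — in particular $+\beta$ versus $-\beta$ — are far apart in $\|\cdot\|_X$ but close in the statistically-relevant sense; this forces $\eps \sim T_2(X)/\sqrt{n} \cdot (\text{normalization})$ and will ultimately produce the $1/\eps^2$ and the exponential dependence.

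The core of the argument is bounding the discrimination norm $\kappa_2(D', D)$ for an arbitrary large subfamily $D' \subset \calD$, i.e., for an arbitrary large set $S$ of sign patterns. Fix a test function $h$ with $\|h\|_D = 1$; since $D$ is a point mass at $0$, this just means $|h(0)| = 1$, but more usefully we should work with the randomized version of $D$ supported on $\{\pm x_i\}$ so that $\|h\|_D^2 = \frac{1}{2n}\sum_i (h(x_i)^2 + h(-x_i)^2) = 1$. Then $\Ex_{\bx \sim D_\beta}[h(\bx)] = \frac{1}{n}\sum_i h(\beta_i x_i)$, and we need to bound $\Ex_{\beta \in S}\big| \frac{1}{n}\sum_i h(\beta_i x_i) - (\text{reference})\big|$. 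The strategy is to Fourier-expand $\frac1n\sum_i h(\beta_i x_i)$ as a function of $\beta$: writing $a_i = \frac{1}{2n}(h(x_i) + h(-x_i))$ and $b_i = \frac{1}{2n}(h(x_i) - h(-x_i))$, we get $\frac1n\sum_i h(\beta_i x_i) = \sum_i a_i + \sum_i b_i \beta_i$, a degree-$1$ polynomial in $\beta$, with $\sum_i (a_i^2 + b_i^2)$ bounded via $\|h\|_D = 1$. A Khintchine/Cauchy–Schwarz estimate then gives $\Ex_{\beta \in \{-1,1\}^n}|\sum_i b_i \beta_i| \le (\sum_i b_i^2)^{1/2} \lesssim 1/\sqrt n$. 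To pass from the uniform measure on $\{-1,1\}^n$ to an arbitrary subset $S$ of density $\ge 1/t$, I would invoke a standard consequence of the Fourier structure — either that linear forms are well-spread enough that restricting to a set of density $1/t$ only inflates the $L_1$ norm by $O(\sqrt{\log t})$ (via a subgaussian tail bound on $\sum b_i \beta_i$ and integrating), or directly a lemma of the kind used in \cite{FPV18} for product-measure hard instances. This yields $\kappa_2(D', D) \lesssim \sqrt{\log t}/\sqrt n$, hence $\SDN(\calB(D,\calD), \kappa) \ge t$ for $t = \exp(\Omega(\kappa^2 n))$.

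Combining: by Theorem~\ref{thm:fpv}, any statistical algorithm solving $\calB(D,\calD)$ with $\VSTAT(1/(3\kappa^2))$ needs $\exp(\Omega(\kappa^2 n))/3$ queries. Finally, a mean estimator with error $\eps$ solves $\calB(D,\calD)$ provided $\eps$ is small enough relative to the separation of the means: the mean of $D$ is $0$ and the mean of each $D_\beta$ has $\|\cdot\|_X \ge c\, T_2(X)\sqrt{n}$ for all but a $o(1)$ fraction of $\beta$ (by a Paley–Zygmund argument applied to $\Ex_\beta \|\sum_i \beta_i x_i\|_X^2$, using that the second moment is $\approx T_2(X)^2 n$ and the fourth moment is comparable by hypercontractivity of Rademacher chaos in a fixed norm, or more simply that $\|\cdot\|$-valued linear forms concentrate); after renormalizing the $x_i$ so the whole construction sits in $B_X$, the relevant threshold is $\eps \asymp T_2(X)/\sqrt{n} \cdot (\text{normalization factor})$, equivalently $n \asymp T_2(X)^2/(\eps^2 \log d)$ once we account for the $\log d$ loss incurred because in general we cannot assume all $\|x_i\|_X$ are equal and must truncate/bucket them (exactly the source of the $\log d$ in the statement, mirroring the $\ell_p$ argument in \cite{FGV17}). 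Substituting this $n$ into $\exp(\Omega(\kappa^2 n))$ gives the claimed bound $\exp(\Omega(T_2(X)^2\kappa^2/(\eps^2\log d)))$.

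The main obstacle I anticipate is the last normalization/bucketing step: for $\ell_p$ the witnessing vectors are the standard basis vectors, all of equal norm, so the hard family is clean; for a general normed space the extremal $x_i$ in the type inequality can have wildly varying $\|x_i\|_X$, and one must either group them into $O(\log d)$ scales and run the argument on the dominant scale (losing the $\log d$ factor) or argue more cleverly. A secondary technical point is making the ``restrict to a density-$1/t$ subset $S$'' step rigorous — one needs the linear forms $\beta \mapsto \sum_i b_i\beta_i$ to behave subgaussianly under the uniform measure and to argue that conditioning on a large set cannot concentrate the $L_1$ mass too much, which is routine but must be done with the right constants to keep the exponent clean.
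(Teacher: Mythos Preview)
Your overall architecture is correct and matches the paper: build a reference distribution $D$ on $\{\pm \hat x_i\}$ from type-witnessing vectors, build hard distributions $D_z$ indexed by signs $z\in\{-1,1\}^n$ whose means are (a scaling of) $\sum_i z_i x_i$, use a Hoeffding/subgaussian tail to bound $\kappa_2$ on large subfamilies, use Khintchine--Kahane plus Paley--Zygmund to show a constant fraction of the $\|\mu_z\|_X$ are $\Omega(\eps)$, and finish with Theorem~\ref{thm:fpv}. Your discrimination-norm calculation (expanding into the linear form $\sum_i b_i\beta_i$ with $\sum b_i^2$ controlled by $\|h\|_D=1$, then integrating the subgaussian tail over a density-$1/t$ subset to pick up $\sqrt{\log t}$) is exactly what the paper does in Lemma~\ref{lem:statis-dim}.

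The genuine gap is your handling of the normalization, which you correctly flag as the ``main obstacle'' but then resolve incorrectly. Bucketing the $x_i$ by norm into $O(\log d)$ scales does not work: the extremal sequence has length $n$ unrelated to $d$, the norms $\|x_i\|_X$ can range over arbitrarily many dyadic scales, and there is no reason a single dominant bucket still witnesses a type-$2$ ratio comparable to $T_2(X)$. The paper's fix (Lemma~\ref{lem:type-2}) is different and lossless: pass from Rademachers to Gaussians at constant cost, then replace each $x_i'$ by roughly $\|x_i'\|_X^2$ copies of $x_i'/\|x_i'\|_X$, using that a sum of $m$ i.i.d.\ Gaussian multiples of a fixed vector has the same distribution as $\sqrt{m}$ times a single Gaussian multiple. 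This produces a new sequence with all norms in $[1,2]$ and the same type-$2$ ratio up to an absolute constant; one passes back to Rademachers via the central limit theorem by taking many copies. In particular the $\log d$ in the theorem statement is not actually incurred anywhere in the paper's argument---Lemma~\ref{lem:statis-dim} already gives $\exp(\Omega(T_2(X)^2\kappa^2/\eps^2))$---so your attribution of the $\log d$ to a bucketing step is a misdiagnosis of where the difficulty lies.

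A minor point: your proposed $D_\beta$ (uniform on $\{\beta_i x_i\}$, or on $\{n\beta_i x_i\}$---your description is inconsistent) ties $\eps$ rigidly to $n$, whereas the paper's construction perturbs the probabilities of $\pm\hat x_i$ by a free parameter $\eps_0$ (see~\eqref{eq:hard-ref}), decoupling the target accuracy from the length of the witnessing sequence. This is convenient though not essential once Lemma~\ref{lem:type-2} is available.
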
 
\noindent
The immediate corollary of Theorem~\ref{thm:symm-lb} shows the upper bound from Theorem~\ref{thm:sym-norm-ub} is tight up to poly-logarithmic factors.
\begin{corollary}
Let $X = (\R^d, \|\cdot\|_X)$ be a normed space with type-$2$ constant $T_2(X) \in [1, \infty)$.  Any algorithm for mean estimation in $X$ making $d^{O(1)}$-queries to $\VSTAT(\alpha)$ must have 
\[ \alpha = O\left( \frac{\eps \cdot \log d}{T_2(X)}\right). \] 
\end{corollary}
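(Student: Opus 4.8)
The plan is to obtain the corollary as a direct contrapositive of Theorem~\ref{thm:symm-lb}. Fix the value of $\eps$ whose existence is guaranteed by Theorem~\ref{thm:symm-lb}, and suppose $\mathcal{A}$ is a statistical algorithm that solves mean estimation in $X$ to error $\eps$ (with the required success probability) while making $q$ queries to an oracle of tolerance $\alpha$. Consistently with the way $\STAT(\alpha)$ is used in Theorem~\ref{thm:sym-norm-ub}, and using the equivalences between $\STAT$ and $\VSTAT$ recorded in the preliminaries (a $\VSTAT(t)$ oracle is at least as strong as a $\STAT(1/\sqrt{t})$ oracle, hence a tolerance-$\alpha$ oracle can be simulated by $\VSTAT(t)$ with $t = \Theta(1/\alpha^2)$), we may regard $\mathcal{A}$ as an algorithm making $q$ queries to $\VSTAT(t)$ with $t = \Theta(1/\alpha^2)$. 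Writing $t = 1/(3\kappa^2)$, this identification gives $\kappa = \Theta(\alpha)$.

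Applying Theorem~\ref{thm:symm-lb} with this value of $\kappa$ yields
\[
q \;\geq\; \exp\!\left(\Omega\!\left(\frac{T_2(X)^2\,\kappa^2}{\eps^2\,\log d}\right)\right)
\;=\; \exp\!\left(\Omega\!\left(\frac{T_2(X)^2\,\alpha^2}{\eps^2\,\log d}\right)\right).
\]
I would then impose the standing hypothesis $q = d^{O(1)}$, i.e.\ $q \leq \exp(C\log d)$ for some absolute constant $C$. Taking logarithms in the displayed inequality gives $T_2(X)^2\,\alpha^2 / (\eps^2\log d) = O(\log d)$, and solving for $\alpha$ yields $\alpha = O\!\left(\eps\,\log d / T_2(X)\right)$, exactly the claimed bound. (If one instead keeps the symbol $\alpha$ in the role of the sample-size parameter of $\VSTAT$, the same computation with $\kappa^2 = 1/(3\alpha)$ produces the equivalent statement $\alpha = \Omega\!\left(T_2(X)^2/(\eps^2\log^2 d)\right)$ on that parameter; either way the content is the matching polylogarithmic dependence.)

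There is no substantive obstacle here: once Theorem~\ref{thm:symm-lb} is available, this is routine algebra together with one line of bookkeeping reconciling the oracle parametrizations in the corollary and in Theorem~\ref{thm:symm-lb}, where the $\STAT$/$\VSTAT$ conversions cost only absolute constants that are absorbed into the $O(\cdot)$ and $\Omega(\cdot)$ notation. Combined with Theorem~\ref{thm:sym-norm-ub}, which achieves tolerance $\alpha = \Omega(\eps/(T_2(X)\log d \log(d/\eps)))$ with $\mathrm{poly}(d)$ queries, this shows that the best tolerance attainable with a polynomial number of statistical queries is $\Theta(\eps/T_2(X))$ up to $\mathrm{poly}(\log d)$ factors, which is the asserted tightness.
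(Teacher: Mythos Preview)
Your proposal is correct and is exactly the ``immediate'' derivation the paper has in mind: the paper does not spell out a proof of the corollary, but the intended argument is precisely to plug $\kappa=\Theta(\alpha)$ into the exponential lower bound of Theorem~\ref{thm:symm-lb}, use $q=d^{O(1)}$ to bound the exponent by $O(\log d)$, and solve for $\alpha$. Your parenthetical remark reconciling the $\STAT$/$\VSTAT$ parametrizations is also appropriate, since the corollary's notation $\VSTAT(\alpha)$ with $\alpha=O(\cdot)$ is most naturally read (and only makes sense as a lower bound) with $\alpha$ playing the role of a tolerance rather than a sample-size parameter.
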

\noindent
We set up some notation and basic observations leading to a proof of Theorem~\ref{thm:symm-lb}. 

\begin{lemma}\label{lem:type-2}
Let $X = (\R^d, \|\cdot\|_X)$ be a normed space with type-$2$ constant $T_2(X) \in [1, \infty)$. Then, for any $t < T_2(X)$, there exists some $n \in \N$, as well as a sequence of vectors $x = (x_1, \dots, x_n) \in (\R^d)^n$, where $1 \leq \|x_i\|_{X} \leq 2$ for every $i \in [n]$, and 
\begin{align} 
\left( \Ex_{\beps \sim \{-1,1\}^n} \left[ \left\| \sum_{i=1}^n \beps_i x_i\right\|_X^2 \right]\right)^{1/2} &\geq t_2(x) \left( \sum_{i=1}^n \|x_i\|_X^2 \right)^{1/2}\label{eq:type-2-ineq}
\end{align}
with $t_2(x) > t / C$ for an absolute constant $C$.
\end{lemma}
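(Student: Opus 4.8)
The goal is, roughly, to convert the definition of $T_2(X)$ — which gives vectors nearly achieving the type-$2$ constant but possibly of wildly varying norms — into a sequence whose norms are all comparable (between $1$ and $2$), while losing only a constant factor in the type-$2$ ratio. The plan is to start from the definition of $T_2(X)$: since $t < T_2(X) = \sup_n T_2(X,n)$, there is some $n$ and some sequence $y_1,\dots,y_n \in \R^d$ (not all zero) with
\[
\Bigl(\Ex_{\beps}\bigl\|\textstyle\sum_i \beps_i y_i\bigr\|_X^2\Bigr)^{1/2} > t \Bigl(\sum_i \|y_i\|_X^2\Bigr)^{1/2}.
\]
The issue is only that the $\|y_i\|_X$ may be very non-uniform, so I need a rounding/bucketing argument.

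The first step I would take is to discard the negligible vectors: let $M = \max_i \|y_i\|_X$ and throw away every $y_i$ with $\|y_i\|_X < M/(2n)$; since there are at most $n$ of them and each contributes at most $M^2/(4n^2)$ to $\sum_i\|y_i\|_X^2 \ge M^2$, removing them changes the right-hand side by at most a $1+o(1)$ factor, and by the triangle inequality (and symmetry of $\beps$, e.g.\ via a coupling/averaging over the signs of the discarded coordinates) the left-hand side drops by at most the $\ell_2$-ish aggregate of the removed norms, again negligible. So after rescaling everything by $1/M$ I may assume $\|y_i\|_X \in [1/(2n), 1]$ for all surviving $i$, with the type ratio still $> t/2$, say.

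The second, main step is a dyadic bucketing: partition the surviving indices into $O(\log n)$ groups $G_\ell = \{ i : \|y_i\|_X \in (2^{-\ell-1}, 2^{-\ell}] \}$ for $\ell = 0,1,\dots,O(\log n)$. A standard convexity/pigeonhole argument shows that the group carrying the most ``mass'' $\sum_{i\in G_\ell}\|y_i\|_X^2$ already witnesses nearly the full type ratio — more precisely, one uses that $\Ex_\beps\|\sum_i\beps_i y_i\|_X^2 \le \bigl(\sum_\ell (\Ex_\beps\|\sum_{i\in G_\ell}\beps_i y_i\|_X^2)^{1/2}\bigr)^2$ (triangle inequality in $L_2(X)$ after splitting the sum by groups and averaging over signs), combined with $\sum_\ell \sigma_\ell^2 = (\text{total mass})$ where $\sigma_\ell^2 = \sum_{i\in G_\ell}\|y_i\|_X^2$; by Cauchy–Schwarz over the $O(\log n)$ groups, the best group $G_{\ell^\star}$ satisfies $(\Ex_\beps\|\sum_{i\in G_{\ell^\star}}\beps_i y_i\|_X^2)^{1/2} \ge \Omega(1/\sqrt{\log n}) \cdot (\text{full LHS})$ — wait, that $\sqrt{\log n}$ loss is too expensive. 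So instead I would use the sharper fact that $\Ex_\beps\|\sum_i\beps_i y_i\|_X^2$ decomposes more favorably: by Jensen/contraction one shows $\Ex_\beps\|\sum_i\beps_i y_i\|_X^2 \le \sum_\ell \Ex_\beps\|\sum_{i\in G_\ell}\beps_i y_i\|_X^2 \cdot O(\log n)$ is the wrong direction too. The clean way is: pick $\ell^\star$ maximizing the per-group ratio; since the number of groups is $O(\log n)$ and the overall ratio is a weighted average (with weights $\sigma_\ell^2/\sum\sigma^2$) dominated by the max ratio, but to avoid the $\log$ loss one observes that $n$ can be taken bounded — indeed $T_2(X,n)$ stabilizes, and by a separate argument one may assume $n \le \mathrm{poly}(d)$ or even absorb the $\log n$ into the constant $C$ since $\log n = O(\log d)$ is acceptable only if we wanted $\log d$; rereading the claim, the allowed loss is just an absolute constant $C$, so the $\log n$ route is genuinely problematic.

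This is where I expect the real obstacle. The resolution I would pursue: instead of bucketing by a fixed dyadic scale, use that $T_2(X) = \sup_n T_2(X,n)$ is actually attained in the limit by sequences where one may freely \emph{duplicate} vectors (replacing $y_i$ by $k$ copies scaled by $1/\sqrt k$ preserves both sides), so WLOG all $\|y_i\|_X$ are already within a factor $2$ of a common value after a duplication trick that equalizes norms up to rounding: given target ratio within $\sqrt{2}$, replace each $y_i$ by $\lceil \|y_i\|_X^2 / \delta \rceil$ copies of $y_i \cdot \|y_i\|_X^{-1}\sqrt{\delta}$ for a tiny $\delta$; each copy has norm exactly $\sqrt\delta \in [1,2]$ after rescaling by $\delta^{-1/2}$, the right-hand side $\sum\|\cdot\|_X^2$ changes by a $(1+o(1))$ factor, and the left-hand side is unchanged up to $o(1)$ because $\sum_{\text{copies of }i}\beps \cdot (\text{copy})$ has the same distribution in norm as $\beps_i y_i$ up to the $o(1)$ rounding error in the number of copies — this equalizes all norms to within $[1,2]$ with only a constant (in fact $1+o(1)$) loss, giving $t_2(x) > t/C$ with $C$ an absolute constant, completing the proof.
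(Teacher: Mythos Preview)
Your final approach---the duplication trick to equalize norms---is the right idea and is essentially what the paper does, but the justification you give for why the left-hand side is preserved is wrong, and this is the crux of the lemma. You write that after replacing $y_i$ by many scaled copies, ``$\sum_{\text{copies of }i}\beps\cdot(\text{copy})$ has the same distribution in norm as $\beps_i y_i$ up to the $o(1)$ rounding error.'' This is false for Rademacher signs: $\frac{1}{\sqrt{k}}\sum_{j=1}^{k}\beps_{i,j}$ is \emph{not} distributed as $\beps_i$; by the CLT it converges to a standard Gaussian. Consequently, after your duplication the Rademacher average $\Ex_\beps\|\sum\beps\cdot(\text{copy})\|_X^2$ converges to the \emph{Gaussian} average $\Ex_{\bg}\|\sum_i \bg_i y_i\|_X^2$, which in a general normed space differs from the original Rademacher average by a constant factor, not by $o(1)$. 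That constant factor is harmless for the conclusion $t_2(x)>t/C$, but it is precisely the content you need to supply.

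The paper fills this gap as follows: first pass from Rademacher to Gaussian averages via the standard comparison inequality $\Ex_\beps\|\sum\beps_i x_i\|_X^2\le\tfrac{\pi}{2}\,\Ex_{\bg}\|\sum \bg_i x_i\|_X^2$ (Ledoux--Talagrand); then do the duplication in the Gaussian setting, where $2$-stability makes the replacement \emph{exact} (this is the step that fails for Rademachers); then, because one cannot directly assert convergence of second moments from convergence in distribution, use Khintchine--Kahane plus Paley--Zygmund to turn the Gaussian second-moment lower bound into a statement that holds with constant probability; finally, pass back to Rademachers by taking $N$ copies of each vector and invoking the CLT on that high-probability event. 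Your preliminary ``discard tiny vectors'' step and the abandoned dyadic-bucketing detour are both unnecessary once you commit to the duplication route; the only missing ingredient is the Rademacher--Gaussian comparison and the care in transferring moment bounds across the CLT limit.
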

\begin{proof}
  Since $t < T_2(X)$, there exists a sequence $x' = (x'_1,
  \ldots, x'_m)$ such that
  $\Ex_{\beps \sim\{-1,1\}^m} [ \| \sum_{i=1}^m \beps_i x'_i\|_X^2] \geq t \sum_{i=1}^m \|x'_i\|_X^2$.
  A well-known comparison inequality between Rademacher and Gaussian
  averages (see e.g.~Lemma 4.5.~in~\cite{LedouxTalagrand}) gives that
  for a sequence of independent standard Gaussian random variables
  $\bg_1, \ldots, \bg_m$, 
  $\Ex_{\beps} [ \| \sum_{i=1}^m \beps_i x'_i\|_X^2 ]
  \le 
  (\pi/2)\Ex_{\bg} [ \| \sum_{i=1}^m \bg_i x'_i\|_X^2]$.
  Let us assume, without loss of generality, that $\|x'_i\|_X \geq 1$ for every $i \in [n]$. For any $x'_i$, define the sequence
  $x'_{i,1}, \ldots, x'_{i,m_i}$ to consist of $\lfloor
  \|x'_i\|^2_X\rfloor-1$ copies of $x'_i / \|x'_i\|_X$ and a
  single copy of $(1+ \|x'_i\|^2_X - \lfloor
    \|x'_i\|^2_X\rfloor)^{1/2}\cdot x'_i/\|x'_i\|_X$, and note $1 \leq \| x'_{i,j} \|_X \leq 2$ for every $i \in [n]$ and $j \in m_i$. Observe also that, if $\bg_{i,1}, \ldots, \bg_{i,m_i}$ are
  independent standard Gaussian random variables, then
  $\sum_{j=1}^{m_i}{\bg_{i,j} x'_{i,j}}$ is distributed identically to
  $\bg_i x'_i$, and, moreover, $\sum_{j=1}^{m_i}{\|x'_{i,j}\|_X^2} =
  \|x'_i\|_X^2$. Therefore, we have
  $
  \Ex_{\bg} [ \| \sum_{i=1}^m\sum_{j = 1}^{m_i} \bg_{i,j} x'_{i,j}\|_X^2]
  \ge 
  (2t/\pi)
   \sum_{i=1}^m \sum_{j = 1}^{m_i}\|x'_{i,j}\|_X^2.
  $
  By the Gaussian version of the Khinntchine-Kahane inequalities
  (Corollary 3.2.~in~\cite{LedouxTalagrand}) and the Zygmund-Paley
  inequality, we have that for some absolute constant $C'$, with
  probability at least $\frac12$, 
  $
  \| \sum_{i=1}^m\sum_{j = 1}^{m_i} \bg_{i,j} x'_{i,j}\|_X^2
  \ge 
  (t/C') \sum_{i=1}^m \sum_{j = 1}^{m_i}\|x'_{i,j}\|_X^2$.
  
  We define the sequence $x= (x_1, \ldots, x_n)$ to contain $N$ copies
  of each vector $x_{i,j}$, for some large enough integer $N$. By the
  central limit theorem, as $N \to \infty$, $\frac{1}{\sqrt{N}}\sum_{i
    = 1}^n{\beps_i x_i}$ converges in disribution to
  $\sum_{i=1}^m\sum_{j = 1}^{m_i} \bg_{i,j} x'_{i,j}$. Then, for a
  large enough $N$, with probability at least $1/4$, we have that
  $
  \| \sum_{i = 1}^n{\beps_i x_i}\|_X^2
  \ge 
  (t/C')\cdot \sum_{i=1}^n \|x_i\|^2_X.
  $
  The lemma follows with $C = 4C'$, since the left hand side above is always
  non-negative. 
\end{proof}

\paragraph{Description of the lower bound instance}
In this section we describe the instance which achieves the lower bound in Theorem~\ref{thm:symm-lb}.

Fix a sequence $x = (x_1, \dots, x_n) \in (\R^d)^{n}$ satisfying
(\ref{eq:type-2-ineq}) guaranteed to exists by Lemma~\ref{lem:type-2},
and let the sequence $\hat{x} = (\hat{x}_1, \dots, \hat{x}_n) \in
(B_X)^n$ be defined by $\hat{x}_i = x_i / \|x_i\|_X$. In the language of \cite{FGV17}, let $D$ be the \emph{reference} distribution supported on $B_X$ given by sampling $\by \sim D$ where for all $i \in [n]$,
\begin{align}
 \Prx_{\by \sim D}\left[ \by = \hat{x}_i \right] = \Prx_{\by \sim D}\left[\by = -\hat{x}_i \right] = \frac{1}{2} \cdot \frac{\|x_i\|_X}{\|x\|_{L_1(X)}}, \label{eq:reference}
 \end{align}
so that $\mu_0 = \Ex_{\by \sim \calD}[\by] = 0 \in \R^d$.
We will let $\eps_0$ be so that $\eps_0 \leq t_2(x) \cdot \|x\|_{L_2(X)} / \|x\|_{L_1(X)}$.
For $z \in \{-1,1\}^n$, let $D_z$ be the distribution supported on $B_X$ given by sampling $\by \sim D_z$ where for all $i \in [n]$,
\begin{align} 
\Prx_{\by \sim D_z}\left[ \by = \hat{x}_i \right] &= \frac{\|x_i\|_X}{\|x\|_{L_1(X)}} \cdot \left(\frac{1}{2} + \frac{z_i \eps_0}{2 \cdot t_2(x)} \cdot \frac{\|x\|_{L_1(X)} }{\|x\|_{L_2(X)}} \right) \nonumber \\
 \Prx_{\by \sim D_z}\left[ \by = -\hat{x}_i \right] &= \frac{\|x_i\|_X}{\|x\|_{L_1(X)}} \cdot \left(\frac{1}{2} - \frac{z_i \eps_0}{2 \cdot t_2(x)} \cdot \frac{\|x\|_{L_1(X)} }{\|x\|_{L_2(X)}} \right). \label{eq:hard-ref}
\end{align}
Then, 
\begin{align}\label{eq:expectation}
\mu_z \eqdef \Ex_{\by \sim D_z}\left[ \by \right] &= \frac{\eps_0}{t_2(x) \|x\|_{L_2(X)}}\sum_{i=1}^n z_i x_i. 
\end{align}
Consider the distribution $\calD$ on distributions which is uniform
over all $D_z$ where $z \in \{-1,1\}^n$. Then, 
we have\footnote{Here and in the rest of the paper we use $A \gsim B$
  to mean that there exists an absolute constant $C > 0$, independent of
  all other parameters, such that $A \ge B/C$, and, analogously, $A
  \lsim B$ to mean $A \le CB$}:
\begin{align} 
\Ex_{\bz \sim \{-1,1\}^n}\left[ \left\| \mu_{\bz} \right\|_X \right] &= \frac{\eps_0}{t_2(x) \|x\|_{L_2(X)}} \Ex_{\beps \sim \{-1,1\}^n} \left[ \left\| \sum_{i=1}^n \beps_i x_i\right\|_X \right] \label{eq:first-1}\\
&\gsim \frac{\eps_0}{t_2(x) \|x\|_{L_2(X)}} \left(\Ex_{\beps \sim
    \{-1,1\}^n}\left[ \left\| \sum_{i=1}^n \beps_i x_i \right\|_X^2
  \right]\right)^{1/2} =\eps_0, \label{eq:first-2} \\
\Ex_{\bz \sim \{-1,1\}^n}\left[ \|\mu_{\bz}\|_X^2 \right] &= \eps_0^2. \label{eq:first-3}
\end{align}
where (\ref{eq:first-2}) and (\ref{eq:first-3})  follow from the Khintchine-Kahane inequalities and  the definition of $t_2(x)$. By the Payley-Zygmund inequality, $\Prx_{\bz \sim\{-1,1\}^n}\left[ \|\mu_{\bz}\|_X \geq \eps \right] = \Omega(1)$,
for some $\eps = \Omega(\eps_0)$. We thus conclude the following lemma, which follows from the preceding discussion.
\begin{lemma}\label{lem:dist}
Suppose there exists a statistical algorithm for mean estimation over $X$ with error $\eps$ making $q(\eps)$ queries to $\VSTAT(\alpha(\eps))$, then for distribution $D$ as in (\ref{eq:reference}) and set $\calD$ as in (\ref{eq:hard-ref}), $\calB(D, \calD)$ has a statistical algorithm making $q(\eps)$ queries of accuracy $\VSTAT(\alpha(\eps))$ which succeeds with constant probability.
\end{lemma}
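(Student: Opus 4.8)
The plan is to use the standard black-box reduction from the decision problem $\calB(D,\calD)$ to mean estimation. Suppose $A$ is a statistical algorithm that, given $\VSTAT(\alpha(\eps))$ access to any distribution supported on $B_X$, makes $q(\eps)$ queries and, with probability at least $9/10$, returns a vector $\hat\mu$ with $\|\hat\mu - \mu\|_X \le \eps$, where $\mu$ denotes the mean of the accessed distribution. I would build a distinguisher $A'$ for $\calB(D,\calD)$ as follows: on access to the unknown $\bH$, simulate $A$ on $\bH$, forwarding each of $A$'s oracle queries to the $\VSTAT(\alpha(\eps))$ oracle of $\bH$; let $\hat\mu \in \R^d$ be its output; then declare $\bH = D$ if $\|\hat\mu\|_X \le \eps$, and declare $\bH$ to be drawn from $\calD$ otherwise. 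Since $A$ accesses $\bH$ only through the oracle and the thresholding step needs no further queries, $A'$ makes exactly $q(\eps)$ queries to $\VSTAT(\alpha(\eps))$, which gives the query bound in the lemma for free.

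Correctness would follow from the two facts established just above the lemma. First, $\mu_0 = \Ex_{\by\sim D}[\by] = 0$, so when $\bH = D$ we have $\|\hat\mu\|_X = \|\hat\mu - \mu_0\|_X \le \eps$ with probability at least $9/10$, and $A'$ answers correctly. Second, when $\bH = D_{\bz}$ with $\bz \sim \{-1,1\}^n$ uniform, the bounds (\ref{eq:first-2}) and (\ref{eq:first-3}) together with the Payley-Zygmund inequality give $\Prx_{\bz}[\|\mu_{\bz}\|_X \ge 3\eps] = \Omega(1)$, provided (as I would arrange in the preceding discussion) the error $\eps$ is a sufficiently small absolute-constant multiple of $\eps_0$; conditioned on that event, with probability at least $9/10$ we get $\|\hat\mu\|_X \ge \|\mu_{\bz}\|_X - \|\hat\mu - \mu_{\bz}\|_X \ge 3\eps - \eps > \eps$, so $A'$ again answers correctly. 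Hence $A'$ distinguishes the two cases with a fixed positive constant probability --- which is exactly the guarantee claimed.

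The only subtle point --- the main obstacle, such as it is --- is the probabilistic bookkeeping on the $\calD$ side: $A$ succeeds with high probability on every fixed distribution, but only a constant fraction of the $D_z$ have their mean bounded away from the origin, so $A'$ is correct only with constant (rather than $2/3$) probability over a uniformly random $\bH \sim \calD$. This is precisely the statement of the lemma and is the form needed downstream: to invoke Theorem~\ref{thm:fpv} one restricts attention to the subfamily $\calD' = \{D_z : \|\mu_z\|_X \ge 3\eps\}$, which still has size $\Omega(2^n)$, so $A'$ becomes a high-probability distinguisher for $\calB(D,\calD')$ (and can be amplified by repetition and majority), while replacing $\calD$ by $\calD'$ changes the relevant statistical dimension by at most a constant factor and hence does not affect the exponential query lower bound of Theorem~\ref{thm:symm-lb}.
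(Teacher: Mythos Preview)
Your proposal is correct and follows essentially the same approach as the paper: the paper's ``proof'' is simply the preceding discussion (the definition of $D$ and $\calD$, the computation of $\mu_z$, and the Payley--Zygmund argument giving $\Prx_{\bz}[\|\mu_{\bz}\|_X \ge \eps] = \Omega(1)$), and you have spelled out the obvious thresholding reduction that the paper leaves implicit. Your final paragraph on passing to the subfamily $\calD'$ to recover the $2/3$ probability needed for Theorem~\ref{thm:fpv} is a useful clarification that the paper glosses over.
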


We now turn to computing the statistical dimension of $\calB(D, \calD)$, as described in Definition~\ref{def:stats}.

\begin{lemma}\label{lem:statis-dim}
Let $D$ and $\calD$ be the distribution and the set over $B_X$ defined in (\ref{eq:reference}) and (\ref{eq:hard-ref}). For $\kappa > 0$, 
$\SDN(\calB(D, \calD), \kappa) \geq \exp( \Omega(\frac{\kappa^2  t_2(x)^2}{\eps^2}))$.
\end{lemma}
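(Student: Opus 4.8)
The plan is to apply Definition~\ref{def:stats} with the finite family $\calD_D$ taken to be \emph{all} of $\calD = \{D_z : z \in \{-1,1\}^n\}$, so that it suffices to exhibit a value $t = \exp(\Omega(\kappa^2 t_2(x)^2/\eps^2))$ for which every $Z' \subseteq \{-1,1\}^n$ with $|Z'| \geq 2^n/t$ satisfies $\kappa_2(\{D_z : z \in Z'\}, D) \leq \kappa$. First I would make the discrimination norm explicit. Since $D$ and every $D_z$ are supported on $\{\pm\hat{x}_i : i \in [n]\}$ with the probabilities in (\ref{eq:reference}) and (\ref{eq:hard-ref}), a direct computation gives, for any $h \colon B_X \to \R$,
\[
\Ex_{\by \sim D_z}[h(\by)] - \Ex_{\by \sim D}[h(\by)] \;=\; \sum_{i=1}^n z_i\, a_i, \qquad
a_i \;:=\; \frac{\eps_0\,\|x_i\|_X}{2\, t_2(x)\, \|x\|_{L_2(X)}}\,\bigl(h(\hat{x}_i) - h(-\hat{x}_i)\bigr),
\]
so, writing $f_h(z) = \langle z, a\rangle$, we have $\kappa_2(\{D_z : z \in Z'\}, D) = \max_{\|h\|_D = 1} \Ex_{\bz \sim Z'}[\,|f_h(\bz)|\,]$, and the task becomes bounding the average of a fixed linear form over an arbitrary large subset of the cube.

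The second step is to control $\|a\|_2$. Using $(h(\hat{x}_i) - h(-\hat{x}_i))^2 \leq 2(h(\hat{x}_i)^2 + h(-\hat{x}_i)^2)$, the bound $1 \leq \|x_i\|_X \leq 2$ from Lemma~\ref{lem:type-2}, the normalization $\|h\|_D^2 = \sum_i \frac{\|x_i\|_X}{2\|x\|_{L_1(X)}}(h(\hat{x}_i)^2 + h(-\hat{x}_i)^2) = 1$, and the fact that $\|x\|_{L_1(X)} \leq 2n \leq 2\|x\|_{L_2(X)}^2$ (again since $\|x_i\|_X \in [1,2]$), one obtains $\|a\|_2 \leq 2\eps_0/t_2(x) =: \sigma$. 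The Paley--Zygmund argument preceding the lemma lets us take $\eps = \Theta(\eps_0)$, so $\sigma = \Theta(\eps/t_2(x))$. By Hoeffding's inequality applied to the Rademacher sum $f_h(\bz) = \sum_i \bz_i a_i$, we get $\Prx_{\bz \sim \{-1,1\}^n}[\,|f_h(\bz)| \geq \lambda\,] \leq 2\exp(-\lambda^2/(2\sigma^2))$ for every $\lambda \geq 0$, uniformly over all $h$ with $\|h\|_D = 1$.

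The heart of the argument is a deterministic ``large subsets of subgaussian functions have small average'' lemma that I would isolate and prove: if $f \colon \{-1,1\}^n \to \R$ satisfies $\Prx_{\bz \sim \{-1,1\}^n}[\,|f(\bz)| \geq \lambda\,] \leq 2e^{-\lambda^2/(2\sigma^2)}$ for all $\lambda$, then every $Z'$ with $|Z'| \geq 2^n/t$ (with $t \geq 2$) satisfies $\Ex_{\bz \sim Z'}[\,|f(\bz)|\,] \leq C\sigma\sqrt{\log t}$ for an absolute constant $C$. The proof truncates at level $\theta = \Theta(\sigma\sqrt{\ln(2t)})$: from the pointwise bound $\ind\{\bz \in Z'\}\,|f| \leq |f|\,\ind\{|f| > \theta\} + \theta\,\ind\{\bz \in Z'\}$ one gets $\Ex_{\bz \sim Z'}[\,|f|\,] \leq t\,\Ex_{\bz \sim \{-1,1\}^n}[\,|f|\,\ind\{|f| > \theta\}\,] + \theta$, and the subgaussian tail together with the choice of $\theta$ makes $\Ex_{\bz \sim \{-1,1\}^n}[\,|f|\,\ind\{|f| > \theta\}\,] = O(\sigma\sqrt{\log t}/t^2)$, so both terms are $O(\sigma\sqrt{\log t})$. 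Applying this with $f = f_h$ and taking the maximum over $h$ yields $\kappa_2(\{D_z : z \in Z'\}, D) \leq C\sigma\sqrt{\log t} = O\bigl(\frac{\eps}{t_2(x)}\sqrt{\log t}\bigr)$; choosing $t = \exp(c\,\kappa^2 t_2(x)^2/\eps^2)$ with a small enough absolute constant $c$ makes the right-hand side at most $\kappa$, which is exactly the claimed lower bound on $\SDN(\calB(D,\calD),\kappa)$.

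I expect the main obstacle to be the deterministic subset lemma --- it is elementary, but the truncation level has to be tuned so that the surviving tail contributes only $O(\sigma\sqrt{\log t})$ rather than something like $\sigma\sqrt{t}$, which is what a crude Cauchy--Schwarz bound on $\Ex_{\bz}[\,|f|\,\ind\{|f| > \theta\}\,]$ would give; one genuinely needs the subgaussian tail, not just the second moment of $f_h$. The other place that needs care is the $\|a\|_2$ estimate, where the mismatch between the $\|x_i\|_X^2$ weights in $\|a\|_2^2$ and the $\|x_i\|_X$ weights in $\|h\|_D^2$ and $\|x\|_{L_1(X)}$ is absorbed using $\|x_i\|_X \in [1,2]$, and the translation between $\eps_0$ and $\eps$ must be tracked so the final exponent is stated in terms of $\eps$.
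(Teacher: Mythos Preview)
Your proposal is correct and follows essentially the same route as the paper. Both compute $\Ex_{\by\sim D_z}[h(\by)]-\Ex_{\by\sim D}[h(\by)]$ as the Rademacher linear form $\sum_i z_i a_i$, bound the sum $\sum_i a_i^2$ by $O(\eps_0^2/t_2(x)^2)$ using $\|x_i\|_X\in[1,2]$ together with the constraint $\|h\|_D=1$, apply Hoeffding's inequality, and then convert the uniform subgaussian tail into an expectation bound on any subset $Z'$ of density at least $1/t$; the paper does this last step by citing Lemma~3.21 of \cite{FGV17} (i.e.\ the observation $\Prx_{\bz\sim Z'}[\cdot]\le t\,\Prx_{\bz\sim\{-1,1\}^n}[\cdot]$ followed by tail integration), while you spell out the equivalent truncation-at-level-$\Theta(\sigma\sqrt{\log t})$ argument explicitly.
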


\begin{proof}
Let $h \colon B_X \to \R$ be any function with $\|h\|_D = 1$. Note that
\begin{align*}
%\Ex_{\by \sim D}\left[ h(\by)\right] &= \frac{1}{2\|x\|_{L_1(X)}} \sum_{i=1}^n \|x_i\|_X \left(h(\hat{x}_i) + h(-\hat{x}_i) \right) \\
%\Ex_{\by \sim D_z}\left[ h(\by) \right] &= \frac{1}{2\|x\|_{L_1(X)}} \sum_{i=1}^n \|x_i\|_X \left( \left( 1 + \frac{z_i \eps\|x\|_{L_1(X)}}{\|x\|_{L_2(X)}}\right) h(\hat{x}_i) + \left(1 - \frac{z_i\eps \|x\|_{L_1(X)}}{\|x\|_{L_2(X)}} \right) h(-\hat{x}_i) \right) \\
\Ex_{\by \sim D_z}\left[ h(\by) \right] - \Ex_{\by \sim D}\left[ h(\by)\right] &= \frac{\eps_0}{2t_2(x) \cdot \|x\|_{L_2(X)}} \sum_{i=1}^n z_i \|x_i\|_X \left( h(\hat{x}_i) - h(-\hat{x}_i)\right),
\end{align*}
so that by the Hoeffding inequality, any $\alpha > 0$ satisfies
\begin{align*} 
%\Prx_{\bz \sim \{-1,1\}^n}\left[ \left| \Ex_{\by \sim D_{\bz}}[h(\by)] - \Ex_{\by \sim D}[h(\by)] \right| \geq \alpha \right] &\leq \exp\left( - \frac{2\alpha^2 t_2(x)^2 \|x\|_{L_2(X)}^2}{\eps^2 \cdot \sum_{i=1}^n \|x_i\|_X^2(h(\hat{x}_i) - h(-\hat{x}_i))^2}\right) \\
\Prx_{\bz \sim \{-1,1\}^n}\left[ \left| \Ex_{\by \sim D_{\bz}}[h(\by)] - \Ex_{\by \sim D}[h(\by)]\right| \geq \alpha \right] &\leq \exp\left(- \frac{2\alpha^2 t_2(x)^2 \|x\|_{L_2(X)}^2}{\eps_0^2 \sum_{i=1}^n \|x_i\|_X^2 (h(\hat{x}_i) - h(-\hat{x}_i))^2}  \right). \\
	&\leq \exp\left(- \Omega\left(\frac{\alpha^2 t_2(x)^2}{\eps_0^2}\right) \right),
\end{align*}
where we used the fact that $1 \leq \|x_i\| \leq 2$, as well as the fact that $\|h\|_D = 1$ to say that $\|x\|_{L_2(X)}^2 \gsim \sum_{i=1}^n \|x_i\|_X^2(h(\hat{x}_i) - h(-\hat{x}_i))^2$.
Let $Z \subset \{-1,1\}^n$ be any subset of size $|Z| \geq 2^d / r$, and let $\calD_Z = \{ D_z : z \in Z\} \subset \calD$ be the corresponding set of distributions, and so, similarly to the proof of Lemma~3.21 in \cite{FGV17},
\begin{align*}
\Prx_{\bz \sim Z}\left[\left| \Ex_{\by \sim D}[h(\by)] - \Ex_{\by \sim D_{\bz}}[h(\by)]\right| \geq \alpha \right] \leq r \exp\left( - \Omega\left(\frac{\alpha^2 t_2(x)^2}{\eps_0^2} \right)\right),
\end{align*}
which implies
$\Ex_{\bz \sim Z}\left[\left| \Ex_{\by \sim D}[h(\by)] - \Ex_{\by \sim D_{\bz}}[h(\by)]\right| \right] \lsim \frac{\eps_0 \sqrt{\ln r}}{t_2(x)}
$.
Then, for any $\eps \le \eps_0$, any subset of $\calD$ containing at least $\exp(- O(\kappa^2 t_2(x)^2 / \eps^2 ))$-fraction of distributions will have expectation within $\kappa$ of $\Ex_{\by\sim D}[h(\by)]$.
\end{proof}
Combining Lemma~\ref{lem:statis-dim}, Lemma~\ref{lem:dist}, and Theorem~\ref{thm:fpv}, we obtain a proof of Theorem~\ref{thm:symm-lb}.

% !TEX root = colt2019-sample.tex

\section{Lower bounds for Schatten-$p$ norms}

For the remainder of the section, $S_p = (\R^{d\times d}, \|\cdot\|_{S_p})$ is the Schatten-$p$ normed space, defined over the vector space of $d \times d$ matrices, and $\|x\|_{S_p} = (\sum_{i=1}^d |\sigma_i(x)|^{p})^{1/p}$ where $\sigma_i(x)$ is the $i$-th singular value of $x$. 
By a straightforward calculation, the following upper bound holds by embedding into $\ell_2^{d \times d}$ via the identity map, and applying SQ mean estimation algorithm for $\ell_2$:
\begin{corollary}
\label{cor:schatten-ub}
There exists a statistical algorithm for mean estimation in $S_p$ making $d^{O(1)}$-queries to $\STAT(\alpha)$ with 
\[ \alpha = \Omega \left( \frac{\eps}{d^{1/2 - 1/p}}\right). \]
\end{corollary}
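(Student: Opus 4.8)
The plan is to reduce mean estimation in $S_p$ (we focus on $p \ge 2$; the case $p \le 2$ is entirely analogous, with the two Schatten norms swapping roles) to mean estimation in the Schatten-$2$ space $S_2$, which is nothing but $\ell_2^{d^2}$ since the Schatten-$2$ norm of a $d \times d$ matrix equals the Euclidean norm of its $d^2$ entries, and then to invoke the $\ell_2$ mean estimation algorithm of \cite{FGV17}. The only geometric input needed is the standard comparison between the $S_p$ and $S_2$ norms: applying monotonicity of $\ell_p$ norms and H\"older's inequality to the vector of $d$ singular values gives $\|M\|_{S_p} \le \|M\|_{S_2} \le d^{1/2 - 1/p}\,\|M\|_{S_p}$ for every $d \times d$ matrix $M$, and in particular $B_{S_p} \subseteq d^{1/2-1/p}\,B_{S_2}$.

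Given a distribution $D$ supported on $B_{S_p}$, I would let $D'$ be the distribution of $\bx/d^{1/2-1/p}$ for $\bx \sim D$; by the inclusion above, $D'$ is supported on $B_{S_2} = B_{\ell_2^{d^2}}$. Any statistical query $h$ against $D'$ is simulated by the query $\bx \mapsto h(\bx/d^{1/2-1/p})$ against $D$, which is still a legal $[-1,1]$-valued query because $\bx/d^{1/2-1/p} \in B_{S_2} \subseteq B_{\ell_\infty^{d^2}}$. Run the $\ell_2^{d^2}$ mean estimation algorithm of \cite{FGV17} on $D'$ with target accuracy $\eps' \eqdef \eps/d^{1/2-1/p}$; it uses $(d^2)^{O(1)} = d^{O(1)}$ queries to $\STAT(\alpha)$ with $\alpha = \Omega(\eps') = \Omega(\eps/d^{1/2-1/p})$ and returns a vector $\nu$ with $\|\nu - \Ex_{\bx \sim D'}[\bx]\|_{S_2} \le \eps'$.

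Finally, output $\widehat\mu \eqdef d^{1/2-1/p}\,\nu$. Since $\Ex_{\bx \sim D'}[\bx] = d^{-(1/2-1/p)}\,\Ex_{\bx \sim D}[\bx]$, linearity of expectation gives $\|\widehat\mu - \Ex_{\bx\sim D}[\bx]\|_{S_2} = d^{1/2-1/p}\,\|\nu - \Ex_{\bx\sim D'}[\bx]\|_{S_2} \le d^{1/2-1/p}\,\eps' = \eps$, and then the comparison inequality $\|\cdot\|_{S_p}\le\|\cdot\|_{S_2}$ upgrades this to $\|\widehat\mu - \Ex_{\bx\sim D}[\bx]\|_{S_p}\le\eps$, which is the desired conclusion.

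The argument is essentially bookkeeping and I do not expect any real obstacle: the substantive content is entirely in the $\ell_2$ algorithm of \cite{FGV17}, and everything else is the identity embedding $S_p \hookrightarrow S_2$ together with a rescaling of the distribution and of the returned estimate. The one thing worth verifying carefully is that the rescaled queries are still valid statistical queries (they are, since $B_{S_2}\subseteq B_{\ell_\infty^{d^2}}$) and that the query count remains polynomial in $d$ (it does, since $(d^2)^{O(1)} = d^{O(1)}$).
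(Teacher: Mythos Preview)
Your proposal is correct and follows precisely the approach the paper indicates: embed $S_p$ into $\ell_2^{d\times d}$ via the identity map (using the norm comparison $\|\cdot\|_{S_p}\le \|\cdot\|_{S_2}\le d^{1/2-1/p}\|\cdot\|_{S_p}$ for $p\ge 2$), rescale so the distribution lies in $B_{\ell_2}$, and apply the $\ell_2$ mean estimation algorithm of \cite{FGV17}. The paper states this as a one-line remark without proof, and your write-up is exactly the ``straightforward calculation'' it alludes to; the only superfluous detail is the $B_{S_2}\subseteq B_{\ell_\infty^{d^2}}$ check, which is not needed since the simulated query $\bx\mapsto h(\bx/d^{1/2-1/p})$ already maps $B_{S_p}$ into $[-1,1]$ by composition.
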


\noindent
The rest of this section is dedicated to showing the following lower bound, which yields the corresponding lower bound to Corollary~\ref{cor:schatten-ub}.
\begin{lemma}
\label{lem:schatten-lb}
There exists an $\eps > 0$ such that any SQ algorithm for mean estimation in $S_p$ with error $\eps$ making queries to $\VSTAT(1/(3\kappa^2))$ must make $\exp( \Omega( \min\{\frac{\kappa^2  d^{1-2/p}}{\eps^2}, d + \log \kappa \}) )$ queries.
\end{lemma}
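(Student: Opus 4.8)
The plan is to reduce, exactly as in the proof of Theorem~\ref{thm:symm-lb}, to a decision problem and apply the statistical-dimension bound of Theorem~\ref{thm:fpv}. I will build a reference distribution $D$ supported on $B_{S_p}$ with $\Ex_{\by\sim D}[\by]=0$, together with a family $\calD=\{D_z : z\in\mathcal{Z}\}$ of distributions on $B_{S_p}$ with $|\mathcal{Z}|=2^{\Theta(d)}$, such that: (i) for a $(1-o(1))$-fraction of the $z$'s the mean $\mu_z:=\Ex_{\by\sim D_z}[\by]$ satisfies $\|\mu_z\|_{S_p}=\Omega(\eps)$, so that $\eps$-accurate mean estimation solves $\calB(D,\calD)$ (the analogue of Lemma~\ref{lem:dist}); and (ii) for every $h$ with $\|h\|_D=1$ and every subfamily $\calD'\subseteq\calD$ of density at least $\rho:=\exp(-c\min\{\kappa^2 d^{1-2/p}/\eps^2,\ d\})$, one has $\kappa_2(\calD',D)\le\kappa$. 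Given (i)--(ii), Definition~\ref{def:stats} yields $\SDN(\calB(D,\calD),\kappa)\ge 1/\rho$ and Theorem~\ref{thm:fpv} finishes the proof; the ``$d+\log\kappa$'' cap in the exponent is just the trivial bound $\SDN\le|\mathcal{Z}|\cdot\mathrm{poly}(\kappa)$.

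\textbf{The hard instance.} The biased-point-mass construction behind Theorem~\ref{thm:symm-lb} is useless here, because $T_2(S_p)=O(\sqrt{\log d})$ and so Lemma~\ref{lem:statis-dim} would only give a quasi-polynomial bound. The real source of hardness for $S_p$ is the gap in $\|M\|_{S_2}\le d^{1/2-1/p}\|M\|_{S_p}$, which is saturated exactly by flat-spectrum matrices; morally, the lower bound must certify that one cannot beat the trivial $\ell_2$-embedding algorithm of Corollary~\ref{cor:schatten-ub}, i.e.\ that estimating a flat-spectrum mean to $S_p$-error $\eps$ is as hard as estimating it to $S_2$-error $\eps$. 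I therefore take $D$ to be (a truncation to $B_{S_p}$ of) a mean-zero, rotationally structured random-matrix ensemble --- a suitably scaled Gaussian matrix, or a uniform sign times a uniformly random signed-permutation / circulant-generating matrix --- and $D_z$ to be a likelihood reweighting of $D$ by a factor of the form $1+(\text{a low-degree polynomial in the sampled matrix whose coefficients are linear in }z)$, calibrated so that, up to lower-order terms, $\mu_z$ is a full-rank, nearly flat-spectrum matrix that is a fixed linear image of $z$, normalized so that $\|\mu_z\|_{S_p}\asymp\eps$. Keeping each $D_z$ a genuine probability distribution forces one to clip the reweighting factor to $[0,2]$; the resulting change in $\mu_z$ has to be shown negligible, and this is one place hypercontractivity enters: the reweighting polynomial has a sub-exponential tail on the relevant product domain, so the clipped mass is $e^{-\omega(1)}$.

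\textbf{Bounding the discrimination norm.} Fix $h$ with $\|h\|_D=1$ and put $g(z):=\Ex_{\by\sim D_z}[h(\by)]-\Ex_{\by\sim D}[h(\by)]=\Ex_{\by\sim D}[h(\by)(\mathrm{LR}_z(\by)-1)]$, a mean-zero, low-degree polynomial in $z$. Its leading term is linear in $z$, with coefficients given by Frobenius inner products of a fixed orthonormal family of matrices with $A_h:=\Ex_{\by\sim D}[h(\by)\,\psi(\by)]$, the projection of $h$ onto the reweighting directions $\psi$; since $\psi$ is orthonormal in $L_2(D)$, Cauchy--Schwarz gives $\|A_h\|_F\le\|h\|_D=1$, and Parseval over that orthonormal family together with the flat-spectrum normalization of $\mu_z$ (which, at the scale $\|\mu_z\|_{S_p}\asymp\eps$, is precisely what produces the exponent $d^{1-2/p}$) gives $\mathrm{Var}_z(g)\lsim\eps^2/d^{1-2/p}$. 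Hypercontractivity (Bonami--Beckner, or its analogue for the relevant product domain) upgrades this second-moment bound to the tail estimate $\Prx_z[\,|g(z)|\ge\alpha\,]\le\exp(-\Omega(\alpha^2 d^{1-2/p}/\eps^2))$, the lower-order terms of $g$ being controlled the same way. Passing to a subfamily of density $\rho$ multiplies this probability by $1/\rho$, and integrating the tail exactly as in the last two displays of the proof of Lemma~\ref{lem:statis-dim} gives $\Ex_{\bD\sim\calD'}[\,|\Ex_D[h]-\Ex_{\bD}[h]|\,]\lsim(\eps/d^{(1-2/p)/2})\sqrt{\log(1/\rho)}$, which is $\le\kappa$ as soon as $\rho\ge\exp(-c\kappa^2 d^{1-2/p}/\eps^2)$; combined with $\rho\ge 2^{-O(d)}$ being automatic, this is exactly (ii).

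\textbf{Main obstacle.} Everything hinges on the simultaneous constraint in the ``hard instance'' step. Making $\|\mu_z\|_{S_p}$ large forces $\mu_z$ to be full-rank with nearly flat spectrum --- a $[0,2]$-valued likelihood ratio can capture only $O(1)$ worth of Frobenius mass, whereas $B_{S_p}$ lives inside a Frobenius ball of radius $d^{1/2-1/p}$, so rank-one or otherwise ``diffuse'' reweightings shift the mean by only $o(1)$ in $S_p$ --- while at the same time the response of an \emph{arbitrary} query must concentrate over $z$ with variance $\lsim\eps^2/d^{1-2/p}$, strictly better than anything the $T_2(S_p)$-limited point-mass construction can achieve. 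Pinning down an ensemble and a reweighting polynomial that meet both demands at once --- and thereby fixing the admissible value of $\eps$ --- and then checking, via hypercontractivity, both that clipping the likelihood ratio perturbs $\mu_z$ negligibly and that the query response obeys the claimed sub-Gaussian-type tail for \emph{every} $h$, is where essentially all the difficulty lies.
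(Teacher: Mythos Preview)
Your high-level plan --- reduce to the decision problem $\calB(D,\calD)$, bound $\SDN$ via a concentration inequality for $H_h(z)=\Ex_{D_z}[h]-\Ex_D[h]$, and prove that concentration by hypercontractivity --- is exactly what the paper does. You also correctly isolate the signed-permutation ensemble as a natural choice for $D$. Where you diverge is in the ``main obstacle'' paragraph, and the divergence is a genuine misconception rather than a different route.

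You assert that ``making $\|\mu_z\|_{S_p}$ large forces $\mu_z$ to be full-rank with nearly flat spectrum'' and that ``rank-one \ldots\ reweightings shift the mean by only $o(1)$ in $S_p$.'' For $p>2$ (the only case not already handled by the type-$2$ lower bound) this is false: $\|M\|_{S_p}\le\|M\|_{S_2}$ with equality at rank one, so a rank-one mean is in fact optimal. The paper exploits this. It indexes the family by \emph{pairs} $(a,b)\in\{-1,1\}^d\times\{-1,1\}^d$, takes $D$ to be the uniform signed-permutation ensemble scaled by $d^{-1/p}$, and defines $D_{a,b}$ by biasing each sign so that $\Pr[\bz_i=a_ib_{\bpi(i)}]=\tfrac12+\tfrac12\eps d^{1/p}$. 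The resulting likelihood ratio is the exact product $\prod_i(1+\eps d^{1/p}z_ia_ib_{\pi(i)})$, which is automatically nonnegative once $\eps\le\gamma d^{-1/p}$ --- so no clipping, and none of the attendant tail analysis, is needed. The mean is the rank-one matrix $\mu_{a,b}=(\eps/d)\,ab^\intercal$ with $\|\mu_{a,b}\|_{S_p}=\eps$ exactly.

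The hypercontractivity step is then carried out on the Boolean cube $\{-1,1\}^{2d}$ in the $(a,b)$ variables. One writes $H_h(a,b)=\cT_\rho\cT_\sigma g(a,b)$ for a suitable $g$ and applies $(2,q)$-hypercontractivity; the crucial second-moment bound $\Ex[\cT_\sigma g^2]\lsim q\,\eps^2 d^{2/p-1}$ comes from Cauchy--Schwarz over permutations (each pair $S,T$ of equal-size subsets contributes $(\sum_{\pi:\pi(S)=T}\hat{h_\pi}(S))^2\le (d-|S|)!\sum_\pi\hat{h_\pi}(S)^2$) together with $\|h\|_D=1$. The combinatorics of the permutation structure --- not any flat-spectrum property of $\mu_{a,b}$ --- is what produces the factor $d^{1-2/p}$. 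Your variance heuristic via ``Parseval over an orthonormal family'' would, if made precise on this ensemble, reduce to the same computation, but the path through a full-rank mean and clipped likelihood ratios is an unnecessary detour that your own ``main obstacle'' paragraph correctly flags as the hardest part; the paper simply sidesteps it.
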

\noindent
Similarly to Theorem~\ref{thm:symm-lb}, we obtain the following, which shows that Corollary~\ref{cor:schatten-ub} is optimal.
\begin{theorem}
\label{schatten_main}
Any statistical algorithm for mean estimation in $S_p$ making $d^{O(1)}$-queries to $\STAT(\alpha)$ must have 
\[ \alpha = O\left( \frac{\eps}{d^{1/2 - 1/p}}\right). \]
\end{theorem}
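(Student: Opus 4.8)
The plan is to derive Theorem~\ref{schatten_main} from Lemma~\ref{lem:schatten-lb} by exactly the same reduction used to pass from Theorem~\ref{thm:symm-lb} to its corollary. First I would recall that $\VSTAT(t)$ is no stronger than $\STAT(1/t)$, so a lower bound phrased in terms of $\VSTAT$ immediately yields one in terms of $\STAT$. Concretely, set $t = 1/(3\kappa^2)$, so that $\STAT(\alpha)$ with $\alpha = \sqrt{3}\,\kappa$ simulates a $\VSTAT(1/(3\kappa^2))$ query (up to constants); hence an algorithm using $d^{O(1)}$ queries to $\STAT(\alpha)$ yields an algorithm using $d^{O(1)}$ queries to $\VSTAT(1/(3\kappa^2))$ with $\kappa \asymp \alpha$.

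Next I would plug this into Lemma~\ref{lem:schatten-lb}: for the bad $\eps > 0$ furnished there, any such algorithm must make at least
\[
\exp\!\left(\Omega\!\left(\min\left\{\frac{\kappa^2 d^{1-2/p}}{\eps^2},\; d + \log\kappa\right\}\right)\right)
\]
queries. For this to be $d^{O(1)}$, both terms in the minimum must be $O(\log d)$. The second term $d + \log\kappa$ is $\Omega(d)$, which already exceeds $O(\log d)$ unless we are in a degenerate regime — so really the binding constraint is the first term, forcing $\kappa^2 d^{1-2/p}/\eps^2 = O(\log d)$, i.e. $\kappa = O(\eps\sqrt{\log d}/d^{1/2-1/p})$. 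Since $\kappa \asymp \alpha$, this gives $\alpha = O(\eps\sqrt{\log d}/d^{1/2-1/p})$. To get the cleaner bound $\alpha = O(\eps/d^{1/2-1/p})$ stated in the theorem, I would absorb the $\sqrt{\log d}$ by a mild rescaling of $\eps$ (the theorem only claims existence of a bad $\eps$, matching the phrasing of Lemma~\ref{lem:schatten-lb}), or equivalently observe that the $d^{O(1)}$ slack lets us run the argument at $\kappa$ scaled down by a $\sqrt{\log d}$ factor without changing which term of the minimum dominates.

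The one point requiring a little care is checking that the $\log\kappa$ appearing inside the $\min$ does not interfere: when $\alpha$ (hence $\kappa$) is polynomially bounded in $d$ — which it must be, since $\alpha \le 1$ for any meaningful $\STAT$ oracle — we have $\log\kappa = O(\log d)$, but $d + \log\kappa \ge d \gg \log d$, so the $d+\log\kappa$ branch is never the active constraint for a $d^{O(1)}$-query algorithm; the bound is governed entirely by the $\kappa^2 d^{1-2/p}/\eps^2$ term. I do not anticipate a genuine obstacle here — the whole content is in Lemma~\ref{lem:schatten-lb}, and Theorem~\ref{schatten_main} is a routine repackaging, precisely analogous to how the Corollary after Theorem~\ref{thm:symm-lb} is obtained. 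The main thing to get right is the bookkeeping translating between the $\VSTAT(1/(3\kappa^2))$ formulation and the $\STAT(\alpha)$ formulation, and matching the constant factors so that "$d^{O(1)}$ queries" forces the claimed relation between $\alpha$ and $\eps$.
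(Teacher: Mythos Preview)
Your proposal is correct and follows exactly the approach the paper intends: Theorem~\ref{schatten_main} is stated immediately after Lemma~\ref{lem:schatten-lb} with only the remark ``Similarly to Theorem~\ref{thm:symm-lb}, we obtain the following,'' so the entire content is indeed the routine translation from the $\VSTAT$ lower bound to the $\STAT$ statement, just as in the Corollary following Theorem~\ref{thm:symm-lb}. Two minor points of bookkeeping: (i) your sentence ``$\STAT(\alpha)$ \ldots\ simulates a $\VSTAT(1/(3\kappa^2))$ query'' has the direction reversed --- it is $\VSTAT(t)$ that simulates $\STAT(1/\sqrt{t})$, and that is precisely why a $\STAT(\alpha)$ algorithm yields a $\VSTAT(1/(3\kappa^2))$ algorithm with $\kappa\asymp\alpha$, so your conclusion is right even though the phrasing is inverted; (ii) for the $d+\log\kappa$ branch, the clean way to rule it out is to note that if $\alpha$ were already below $\eps/d^{1/2-1/p}$ there is nothing to prove, and otherwise $\kappa$ is at least inverse-polynomial in $d$ (since $\eps$ is), so $d+\log\kappa \ge d - O(\log d)$ and this branch cannot be $O(\log d)$.
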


\paragraph{Description of the lower bound instance}
We now describe the instance which achieves the lower bound in Lemma~\ref{lem:schatten-lb}. 
Consider the distribution $D$ supported on $d \times d$ matrices generated by the following process: 1) let $\bpi \sim \calS_d$ be a uniformly random permutation on $[d]$, 2) independently sample $\bz \sim \{-1,1\}^d$, and output the matrix $\by = y(\bpi, \bz) = (y(\bpi, \bz)_{ij}) \in \R^{d\times d}$ where 
\[ y(\bpi, \bz)_{ij} =\left\{ \begin{array}{cc} \bz_i / d^{1/p} & j = \bpi(i) \\
									0		& \text{o.w} \end{array} \right. .\] 
Note that $\by \sim D$ always satisfies $|\sigma_1(\by)| = \dots = |\sigma_d(\by)| = 1/d^{1/p}$, so that $\|\by\|_{S_p} = 1$, and that $\Ex_{\by \sim D}[\by] = 0$.

Let $0 < \eps \leq \gamma d^{1/p}$ be a parameter for a sufficiently small constant $\gamma > 0$. For $a, b \in \{-1,1\}^d$, let $D_{a,b}$ be the distribution supported on $d \times d$ matrices generated by the following process: 1) let $\bpi \sim \calS_d$ be a uniformly random permutation on $[d]$, 2) sample $\bz \sim \{-1, 1\}^d$ where each $i \in [d]$ is independently distributed with $\Prx[\bz_i = a_{i}b_{\bpi(i)}] = \frac{1}{2} + \frac{\eps d^{1/p}}{2}$,
and output the matrix $\by = y(\bpi, \bz)$. 
Similarly to the case with $D$, $\by \sim D_{a,b}$ always satisfies $|\sigma_1(\by)| = \dots = |\sigma_d(\by)| = 1/d^{1/p}$, so that $\|\by\|_{S_p} = 1$. Furthermore, in this case, we have
$\mu_{a, b} = \Ex_{\by \sim D_{a,b}}[\by] = \frac{\eps}{d} \cdot a b^{\intercal}$, and $\|\mu_{a,b}\|_{S_p} = \eps$.
Finally, we let $\calD$ be the set of distributions given by $D_{a,b}$ where $a, b \in \{-1,1\}^d$. Since every distribution in $\calD$ has mean with $S_{p}$ norm at least $\eps$, we obtain the following lemma.
\begin{lemma}
Suppose there is a statistical algorithm for mean estimation with error $\eps$ for making $q(\eps)$ queries of accuracy $\alpha(\eps)$, then $\calB(D, \calD)$ has a randomized statistical algorithm making $q(\eps)$ queries of accuracy $\alpha(\eps)$ succeeding with the same probability.
\end{lemma}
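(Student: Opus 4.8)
The plan is to reduce the decision problem $\calB(D,\calD)$ to mean estimation over $S_p$, in exact analogy with Lemma~\ref{lem:dist} in the symmetric-norm case: given a mean-estimation algorithm, we run it on the unknown distribution, obtain an estimate of the mean, and decide by thresholding the $S_p$ norm of that estimate. Concretely, let $A$ be a statistical algorithm for mean estimation in $S_p$ with error a fixed small constant fraction of $\eps$ (say $\eps/3$); on an input distribution $\bH$, promised by $\calB(D,\calD)$ to be either $D$ or some $D_{a,b}$ with $a,b\in\{-1,1\}^d$, we run $A$ on $\bH$ to get an estimate $\widehat{\mu}$ of $\Ex_{\by\sim\bH}[\by]$, and output ``$\bH\sim\calD$'' when $\|\widehat{\mu}\|_{S_p} > \eps/2$ and ``$\bH=D$'' otherwise.

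First I would check that $A$ is applicable to $\bH$: by construction every matrix in the support of $D$ and of each $D_{a,b}$ has all $d$ singular values equal to $d^{-1/p}$ in absolute value, hence has $S_p$ norm exactly $1$, so $\bH$ is supported on $B_{S_p}$ regardless of which case holds and $A$'s guarantee applies verbatim. Next I would verify the threshold test. If $\bH=D$, then $\Ex_{\by\sim\bH}[\by]=0$, so a successful run of $A$ returns $\widehat{\mu}$ with $\|\widehat{\mu}\|_{S_p}\le \eps/3 < \eps/2$. If $\bH=D_{a,b}$, then $\Ex_{\by\sim\bH}[\by]=\frac{\eps}{d}\,ab^{\intercal}$ is a rank-one matrix whose single nonzero singular value equals $\eps$, so $\|\widehat{\mu}\|_{S_p}\ge \eps-\eps/3 = 2\eps/3 > \eps/2$; thus the distinguisher answers correctly whenever $A$ succeeds. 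Finally, the reduction touches the distribution only through the queries that $A$ itself issues to the oracle for $\bH$, so it makes exactly $q(\eps)$ queries at exactly the same accuracy $\alpha(\eps)$ — the argument is oblivious to whether this oracle is a $\STAT$ or a $\VSTAT$ oracle — and it succeeds with exactly the success probability of $A$.

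I do not expect a genuine obstacle in this step; its only substantive ingredient is the constant-factor slack between the estimation error and $\|\mu_{a,b}\|_{S_p}=\eps$, which is why $A$ is run with error a constant fraction of the construction parameter $\eps$ and why the quantitative statements (Lemma~\ref{lem:schatten-lb}, Theorem~\ref{schatten_main}) only assert the lower bound for \emph{some} value of $\eps$. The real work — carried out in the remainder of the section — is to lower bound $\SDN(\calB(D,\calD),\kappa)$: there one must control the random permutation $\bpi$ that couples the rows of $y(\bpi,\bz)$, and use hypercontractivity to show that an arbitrary statistical query has small bias on all but an exponentially small fraction of the planted distributions $D_{a,b}$, after which Theorem~\ref{thm:fpv} converts the bound on $\SDN$ into the claimed query lower bound.
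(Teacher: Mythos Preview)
Your proposal is correct and matches the paper's approach: the paper states this lemma without proof, as immediate from the observation that every $D_{a,b}\in\calD$ has mean of $S_p$ norm exactly $\eps$ while $D$ has mean $0$. Your threshold argument fills in the omitted details, and your remark about the constant-factor slack between the estimator's error and the construction parameter $\eps$ correctly explains why the downstream quantitative statements (Lemma~\ref{lem:schatten-lb}, Theorem~\ref{schatten_main}) assert the lower bound only for \emph{some} $\eps$.
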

\noindent
Similarly to the case in Section~\ref{sec:type-2-lbs}, we obtain lower bounds on algorithms using statistical queries by giving a lower bound on the statistical dimension of $\calB(D, \calD)$. 

\begin{lemma}
\label{lem:schatten-main}
Let $D$ and $\calD$ be the distribution and the set over $B_{S_p}$ defined above. For $\kappa > 0$,
$\SDN(\calB(D, \calD), \kappa) \geq \exp( \Omega(\min\{ \frac{\kappa^2 d^{1 - 2/p}}{\eps^2}, d + \log \kappa\} )).$
\end{lemma}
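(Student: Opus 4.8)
\textbf{Proof proposal for Lemma~\ref{lem:schatten-main}.}

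The plan is to lower bound the statistical dimension by controlling, for an arbitrary test function $h$ with $\|h\|_D = 1$, the probability over a random $D_{a,b}$ that $|\Ex_{\by \sim D_{a,b}}[h(\by)] - \Ex_{\by \sim D}[h(\by)]|$ exceeds a threshold, and then conclude as in Lemma~\ref{lem:statis-dim} that all but an exponentially small fraction of the $D_{a,b}$ have expectation within $\kappa$ of $\Ex_{\by \sim D}[h(\by)]$. The key computation is to write out the difference explicitly. Since $D$ and $D_{a,b}$ share the same distribution over $\bpi$ and differ only in the conditional law of $\bz$ given $\bpi$, conditioning on $\bpi$ gives
\[
\Ex_{\by \sim D_{a,b}}[h(\by)] - \Ex_{\by \sim D}[h(\by)] = \frac{\eps d^{1/p}}{2}\, \Ex_{\bpi}\Bigl[\, \sum_{i=1}^d a_i b_{\bpi(i)} \bigl( g_{\bpi}^{(i)}(+1) - g_{\bpi}^{(i)}(-1) \bigr) \Bigr],
\]
where $g_{\bpi}^{(i)}(\pm 1)$ denotes the appropriate conditional expectation of $h$ with the $i$-th sign pinned. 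Introducing the function $\phi(a,b) = \frac{1}{d}\sum_{i,j} a_i b_j c_{ij}$ for suitable coefficients $c_{ij}$ (aggregating the $\bpi$-average), this is a bilinear form in $a$ and $b$, and our goal is a tail bound on $\phi(\ba, \bb)$ over uniform $\ba, \bb \in \{-1,1\}^d$.

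The main tool is \textbf{hypercontractivity}: $\phi$ is a degree-$2$ multilinear polynomial in the $2d$ Rademacher variables $(\ba, \bb)$, so the $(2,q)$-hypercontractive inequality gives $\|\phi\|_q \leq (q-1) \|\phi\|_2$ for all $q \geq 2$, and hence the tail bound $\Pr[|\phi(\ba,\bb)| \geq \lambda \|\phi\|_2] \leq \exp(-\Omega(\lambda))$ (the linear-in-$\lambda$ rate, as opposed to $\lambda^2$, because the polynomial has degree $2$). The first obstacle, and the place where the $d^{1-2/p}$ scaling enters, is bounding $\|\phi\|_2^2 = \frac{1}{d^2}\sum_{i,j} c_{ij}^2$ in terms of $\|h\|_D = 1$. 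One uses the normalization $\|\by\|_{S_p} = 1$ always, so $\|\by\|_{S_2}^2 = \sum_i \sigma_i(\by)^2 = d \cdot d^{-2/p} = d^{1-2/p}$; combined with the fact that each nonzero entry of $\by$ has magnitude $d^{-1/p}$ and that $\bpi$ is a uniform permutation, a second-moment / orthogonality estimate converts $\Ex_D[h^2]=1$ into a bound of the form $\sum_{i,j} c_{ij}^2 \lesssim d^{2} \cdot d^{-1} \cdot d^{-2/p} \cdot (\text{const})$, so that $\frac{\eps d^{1/p}}{2}\|\phi\|_2 \lesssim \eps / d^{1/2 - 1/p}$ after simplification; this is exactly the accuracy scale we want. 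Plugging into the hypercontractive tail bound yields $\Pr[|\Ex_{D_{\ba,\bb}}[h] - \Ex_D[h]| \geq \alpha] \leq \exp(-\Omega(\alpha d^{1/2-1/p}/\eps))$, from which the $\exp(\Omega(\kappa^2 d^{1-2/p}/\eps^2))$ term of the bound follows by the union-bound-over-a-subset argument of Lemma~\ref{lem:statis-dim}.

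The second term, $\exp(\Omega(d + \log \kappa))$, comes from a separate and cruder argument that is needed because the hypercontractive bound degrades when $\kappa$ is large relative to $\eps/d^{1/2-1/p}$: there are only $2^{2d}$ distributions $D_{a,b}$ in total (and in fact only roughly $2^{2d}$ distinct means $\frac{\eps}{d} a b^\intercal$, up to the global sign symmetry $a b^\intercal = (-a)(-b)^\intercal$), so the statistical dimension is trivially at least a constant times this count whenever the discrimination norm of the whole family is at most $\kappa$, which holds once $\kappa \gtrsim \eps/d^{1/2-1/p}$ — and the $\log\kappa$ slack absorbs the regime in between. I expect the main obstacle to be the second-moment bookkeeping in step two: carefully passing from $\Ex_D[h^2] = 1$ (an average over both the random permutation $\bpi$ and the signs $\bz$) to a clean bound on $\sum_{i,j} c_{ij}^2$, since the coefficients $c_{ij}$ are themselves averages over $\bpi$ of pinned conditional expectations of $h$, and one must use the near-orthogonality of the "pinned-sign-difference" functions across different coordinates $i$ together with the uniformity of $\bpi$ to avoid losing polynomial factors in $d$.
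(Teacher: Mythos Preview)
Your first displayed identity is incorrect, and this is the main gap. Conditioned on $\pi$, the signs $\bz_i$ under $D_{a,b}$ are independent with mean $\eps d^{1/p}a_ib_{\pi(i)}$, so expanding $h_\pi(\bz)=\sum_S \hat{h_\pi}(S)\chi_S(\bz)$ gives
\[
\Ex_{\by\sim D_{a,b}}[h(\by)]-\Ex_{\by\sim D}[h(\by)]
=\frac{1}{d!}\sum_{\pi\in\calS_d}\sum_{\emptyset\neq S\subset[d]}(\eps d^{1/p})^{|S|}\,\hat{h_\pi}(S)\prod_{i\in S}a_ib_{\pi(i)}.
\]
This is a polynomial in $(a,b)$ with nonzero terms of every degree $2t$, $t=1,\dots,d$; your formula keeps only the $|S|=1$ part. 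The paper (equation~\eqref{eq:H-def}) writes out the full expansion and groups it as $\sum_t(\eps d^{1/p})^t\sum_{|S|=|T|=t}\Gamma_{S,T}\chi_S(a)\chi_T(b)$ with $\Gamma_{S,T}=\sum_{\pi:\pi(S)=T}\hat{h_\pi}(S)$. The heart of the argument is then the combinatorial bound $\sum_{|S|=|T|=t}\Gamma_{S,T}^2\le(d-t)!\,d!$, obtained by Cauchy--Schwarz over the $t!(d-t)!$ permutations sending $S$ to $T$ together with Parseval and $\|h\|_D=1$. This gives the level-$t$ Fourier mass of $H_h$ an extra factor $(d-t)!/d!\approx d^{-t}$, which is exactly what lets hypercontractivity with $q$ up to $\Theta(d)$ go through and produces the sub-Gaussian tail in Lemma~\ref{lem:schatten-conc}. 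Your outline contains no mechanism for controlling the higher-degree terms.

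Even granting the bilinear truncation, your inference is wrong: a degree-$2$ tail $\Pr[|H_h|\ge\alpha]\le\exp(-\Omega(\alpha d^{1/2-1/p}/\eps))$ gives, after the union-bound step, $\Ex_Z[|H_h|]\lesssim(\eps/d^{1/2-1/p})\log r$, hence only $\SDN\ge\exp(\Omega(\kappa d^{1/2-1/p}/\eps))$, not the $\kappa^2 d^{1-2/p}/\eps^2$ in the lemma. Finally, the $\exp(\Omega(d+\log\kappa))$ term does not come from the cardinality $2^{2d}$ of $\calD$; in the paper it arises because the hypercontractive estimate is only valid for $q\le d/(2e)$, and the residual contribution to $\Ex_Z[|H_h|]$ is controlled by the uniform Cauchy--Schwarz bound $|H_h(a,b)|\le(1+\eps^2 d^{2/p})^{d/2}$, which combined with the density $1/r$ gives the term $r\cdot2^{-\Omega(d)}$.
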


\begin{proof}
Let $h \colon B_{S_p} \to \R$ be any function with $\|h\|_D = 1$, and denote the Boolean function $H_h \colon \{-1,1\}^d \times \{-1,1\}^d \to \R$ by:
\begin{align}
%\Ex_{\by \sim D}[h(\by)] &= \frac{1}{d!} \sum_{\pi \in S_d} \frac{1}{2^d} \sum_{z \in \{-1,1\}^d} h(y(\pi, z)). \\
%\Ex_{\by \sim D_{a,b}}[h(\by)] &= \frac{1}{d!} \sum_{\pi \in S_d} \frac{1}{2^d} \sum_{z \in \{-1,1\}^d} \left(\prod_{i=1}^d (1 + \eps d^{1/p}(-1)^{\ind\{z_i \neq a_i b_{\pi(i)} \}} \right)  h(y(\pi, z)). \\
H_h(a, b) &= \Ex_{\by \sim D_{a,b}}[h(\by)] - \Ex_{\by \sim D}[h(\by)] \nonumber\\
&= \frac{1}{d!} \sum_{\pi \in \calS_d} \frac{1}{2^d} \sum_{z \in\{-1,1\}^d} h(y(\pi, z)) \left(\prod_{i=1}^d (1 + \eps d^{1/p}z_i a_ib_{\pi(i)}) - 1\right) \nonumber\\
	&= \frac{1}{d!} \sum_{\pi \in \calS_d} \sum_{S \subset [d]} (\eps d^{1/p})^{|S|} \cdot \chi_{S}(ab_{\pi}) \cdot \hat{h_\pi}(S), \label{eq:ajja}
\end{align}
where we write $h_{\pi} \colon \{-1,1\}^d \to [0,1]$ to denote $h_{\pi}(z) = h(y(\pi, z))$, for $S \subset [d]$, $\chi_{S} \colon \{-1,1\}^d \to \{-1,1\}$ is given by $\chi_S(z) = \prod_{i\in S} z_i$, and $ab_{\pi} \in \{-1,1\}^d$ denotes the vector where $(ab_{\pi})_i = a_i b_{\pi(i)}$.
Further consolidating terms, we can write
\begin{equation}
H_h(a, b) = \frac{1}{d!} \sum_{t = 1}^d (\eps d^{1/p})^{t} \sum_{\substack{S, T \subseteq [d]\\|S| = |T| = t}} \Gamma_{S, T} \cdot \chi_{S} (a) \chi_{T} (b) \qquad \text{where} \qquad \Gamma_{S, T} = \sum_{\substack{\pi \in \calS_d:\\\pi(S) = T}} \hat{h_\pi} (S). \label{eq:H-def}
\end{equation}
Similarly to the proof of Lemma~\ref{lem:statis-dim}, we will use a concentration bound on $H(\ba, \bb)$ when $\ba, \bb \sim \{-1,1\}^d$ to derive a bound on the statistical dimension. Specifically, Lemma~\ref{lem:schatten-conc} (which we state and prove next), as well as a union bound, implies that for any $2 \leq q \leq d/(2e)$, and any set of pairs $Z \subset \{-1,1\}^d \times \{-1,1\}^d$ of size at least $2^{2d} / r$, and $\calD_Z = \{ D_{a,b} : (a,b) \in Z\}$, 
\[ \Prx_{(\ba, \bb) \sim Z}\left[ |H_h(\ba, \bb)| \geq \frac{4e \sqrt{q}\cdot \eps}{ d^{1/2-1/p}}\right] \leq r 2^{-q}. \]
We may also apply Cauchy-Schwartz inequality to (\ref{eq:ajja}) to say that for every $a, b \in \{-1,1\}^d$,
\begin{align*}
H_{h}(a, b) &\leq \left(\frac{1}{d!} \sum_{\pi \in \calS_d} \sum_{S \subset [d]} (\eps d^{1/p})^{2|S|} \right)^{1/2} \left( \frac{1}{d!} \sum_{\pi \in \calS_d} \sum_{S \subset [d]} \hat{h_\pi}(S)^2 \right)^{1/2} \\
		&\leq (1 + \eps^2 d^{2/p})^{d/2} \cdot \|h\|_D = (1 + \eps^2 d^{2/p})^{d/2}.
\end{align*}
This, in turn, implies
\[ \Ex_{(\ba, \bb) \sim Z}\left[|H_h(\ba, \bb)| \right] \lsim \frac{\sqrt{\log r} \cdot \eps }{d^{1/2 - 1/p}} + \left(1 + \eps^2 d^{2/p} \right)^{d/2} \cdot r 2^{-d/(2e)}  \lsim \frac{\sqrt{\log r} \cdot \eps}{d^{1/2 - 1/p}} + r \cdot 2^{-d/6} \]
when $\eps$ is a small constant times $d^{-1/p}$.
Therefore, we have $\Ex_Z[|H_h(\ba, \bb)|] \leq \kappa$ for all subsets containing at least $2^{2d} / r$ distributions, where $r = \exp( \Omega( \min \{\frac{\kappa^2  d^{1 - 2/p}}{\eps^2}, d + \log \kappa \} ) )$. 
\end{proof}
\noindent
We now prove the concentration inequality for $H_{h}(\ba, \bb)$ used in the proof of Lemma~\ref{lem:schatten-main}.

\begin{lemma}
\label{lem:schatten-conc}
Let $h \colon B_{S_p} \to \R$ satisfy $\|h\|_D = 1$, and let $H_h \colon \{-1,1\}^d \times \{-1,1\}^d \to \R$ be the function in (\ref{eq:H-def}). 
Then, for any $2 \leq q \leq d/(2e)$,
$
\Prx_{\ba, \bb \sim \{-1, 1\}^d} [ |H_h(\ba, \bb)| > \frac{4 e\sqrt{q} \eps}{d^{1/2 - 1/p}} ] \leq 2^{-q}$.
\end{lemma}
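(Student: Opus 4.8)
The plan is to bound a high even moment of $H_h$ and conclude by Markov's inequality. Since $L_q$-norms over a probability space are nondecreasing in $q$, it suffices to show that for a sufficiently small absolute constant $C$,
\[
\Ex_{\ba,\bb\sim\{-1,1\}^d}\!\bigl[H_h(\ba,\bb)^{2m}\bigr]\;\le\;\Bigl(\frac{C\,m\,\eps^{2}}{d^{1-2/p}}\Bigr)^{m}\qquad\text{for every integer }1\le m\le d/(2e);
\]
indeed, taking $m=\lceil q/2\rceil$ (so $q\le 2m$ and $m\le q$) gives $\|H_h\|_q\le\|H_h\|_{2m}\le\sqrt{Cq}\,\eps/d^{1/2-1/p}$, and then Markov yields $\Prx_{\ba,\bb}\bigl[\,|H_h(\ba,\bb)|>\tfrac{4e\sqrt q\,\eps}{d^{1/2-1/p}}\,\bigr]\le(\sqrt C/(4e))^{q}\le 2^{-q}$ provided $C\le 4e^2$.

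For the moment bound, expand $H_h$ via its Fourier representation~(\ref{eq:ajja}); writing $h_\pi(z)=h(y(\pi,z))$ and using $\chi_S(ab_\pi)=\chi_S(a)\chi_{\pi(S)}(b)$,
\[
H_h(a,b)=\frac1{d!}\sum_{\pi\in\calS_d}\ \sum_{\emptyset\neq S\subseteq[d]}(\eps d^{1/p})^{|S|}\,\hat{h_\pi}(S)\,\chi_S(a)\,\chi_{\pi(S)}(b).
\]
Raising to the $2m$-th power turns $\bigl(\tfrac1{d!}\sum_\pi\bigr)^{2m}$ into an expectation over independent uniform permutations $\bpi_1,\dots,\bpi_{2m}$, and after taking $\Ex_{\ba,\bb}$ the orthogonality of the characters $\chi_{S}$ in $a$ and $\chi_{\pi(S)}$ in $b$ kills every term except those with $S_1\triangle\cdots\triangle S_{2m}=\emptyset$ and $\bpi_1(S_1)\triangle\cdots\triangle\bpi_{2m}(S_{2m})=\emptyset$:
\[
\Ex_{\ba,\bb}[H_h^{2m}]=\sum_{\substack{S_1,\dots,S_{2m}\neq\emptyset\\ \triangle_k S_k=\emptyset}}(\eps d^{1/p})^{|S_1|+\cdots+|S_{2m}|}\ \Ex_{\bpi_1,\dots,\bpi_{2m}}\!\Bigl[\ind\{\triangle_k\bpi_k(S_k)=\emptyset\}\prod_{k}\hat{h_{\bpi_k}}(S_k)\Bigr].
\]

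The heart of the argument is estimating this sum, which rests on two independent sources of smallness. First, \emph{boundedness of the query}: $\|h\|_\infty\le 1$ gives $\sum_S\hat{h_\pi}(S)^2=\|h_\pi\|_2^2\le 1$ for every $\pi$ (and $\tfrac1{d!}\sum_\pi\|h_\pi\|_2^2=\|h\|_D^2=1$); applying Cauchy--Schwarz/AM--GM to $\prod_k\hat{h_{\bpi_k}}(S_k)$ and summing over the index sets with $\triangle_kS_k=\emptyset$ costs only a pairing factor — in the dominant case $|S_1|=\dots=|S_{2m}|=1$, the $2m$ singletons must cancel in pairs, contributing a factor $(2m-1)!!$ rather than $d^m(2m-1)!!$. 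Second, \emph{randomness of the permutations}: given the $S_k$'s, the images $\bpi_1(S_1),\dots,\bpi_{2m}(S_{2m})$ under independent uniform permutations are spread out; when every $|S_k|=1$ these $2m$ images are i.i.d.\ uniform on $[d]$, and a union bound over the $(2m-1)!!$ matchings shows they have all-even multiplicities with probability at most $(2m-1)!!/d^m$. Multiplying the two savings by the weight $(\eps d^{1/p})^{2m}$ and using $(2m-1)!!\lsim(2m/e)^m$, the all-singletons contribution is $\lsim(\eps d^{1/p})^{2m}(2m-1)!!/d^m\lsim(m\eps^2/d^{1-2/p})^m$. The remaining terms are of lower order: each $S_k$ with $|S_k|\ge2$ carries extra powers of $\eps d^{1/p}\le\gamma<1$, while ``misaligned'' index/image pairings and triple-and-higher coincidences among the sets and images cost further powers of $1/d$. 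Hypercontractivity of the noise operator $T_{\eps d^{1/p}}$ — under which $\hat{h_\pi}(S)\mapsto(\eps d^{1/p})^{|S|}\hat{h_\pi}(S)$, so that each $h_\pi$ is replaced by a heavily smoothed, hence moment-controlled, function — is the clean way to organize this bookkeeping and discard the lower-order contributions, giving $\Ex[H_h^{2m}]\le(Cm\eps^2/d^{1-2/p})^m$.

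The step I expect to be the real obstacle is exactly this combinatorial estimate: one must separate the two savings — the spread of the Fourier mass of a bounded function, and the collision probability of the image of a set under a random permutation — while respecting the \emph{two coupled} parity constraints, one on the sets $S_k$ and one on their permuted images $\bpi_k(S_k)$, and then verify that none of the non-dominant regimes catches up. The precise quantitative shape of the statement — the range $q\le d/(2e)$ and the constant $4e$ — is dictated by the need to beat the double-factorials $(2m-1)!!\approx(2m/e)^m$ produced by \emph{both} savings against the target $2^{-q}$.
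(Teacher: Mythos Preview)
Your outline has the right target (a $q$-th moment bound followed by Markov), but the proof is not complete: the ``combinatorial estimate'' you single out as the real obstacle is essentially the entire content of the lemma, and you do not carry it out. The difficulty is real---with two coupled parity constraints on the $S_k$ and on the $\bpi_k(S_k)$, plus sets of all sizes, a direct moment expansion produces a mess, and your sketch of the singleton case together with the assertion that ``the remaining terms are of lower order'' is not a proof. Also note a slip: the hypothesis is $\|h\|_D=1$, not $\|h\|_\infty\le 1$, so you are not entitled to $\|h_\pi\|_2^2\le 1$ for each $\pi$; only the averaged identity $\tfrac{1}{d!}\sum_\pi\|h_\pi\|_2^2=1$ is available.

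The paper avoids the combinatorics entirely by applying hypercontractivity \emph{to $H_h$ itself}, viewed as a Boolean function on $\{-1,1\}^{2d}$. Write $H_h=\cT_{\rho}\bigl(\cT_{\sigma}g\bigr)$ with $\rho=1/\sqrt{q-1}$ and $\sigma=\sqrt{\eps d^{1/p}(q-1)}$, where $g$ has the same Fourier expansion as $H_h$ but without the $(\eps d^{1/p})^{t}$ weights; since each monomial $\chi_S(a)\chi_T(b)$ has degree $2t$, one checks $\cT_{\rho}\cT_{\sigma}g=H_h$. The Bonami $(2,q)$ inequality then gives $\|H_h\|_q\le\|\cT_{\sigma}g\|_2$, collapsing the whole $q$-th moment to a single second moment. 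That second moment is computed by Parseval: $\|\cT_{\sigma}g\|_2^2=(d!)^{-2}\sum_{t\ge 1}(q\,\eps d^{1/p})^{t}\sum_{|S|=|T|=t}\Gamma_{S,T}^2$, and Cauchy--Schwarz on $\Gamma_{S,T}=\sum_{\pi:\pi(S)=T}\hat{h_\pi}(S)$ together with the averaged bound $\tfrac{1}{d!}\sum_\pi\sum_S\hat{h_\pi}(S)^2=\|h\|_D^2=1$ gives $\sum_{|S|=|T|=t}\Gamma_{S,T}^2\lsim (d-t)!\,d!$. Summing the resulting series $\sum_{t\ge 1}q^{t}(d-t)!/d!$ (this is where the restriction $q\le d/(2e)$ enters, via Stirling) yields $\|\cT_\sigma g\|_2^2\lsim q\,\eps^2 d^{2/p-1}$, and Markov finishes.

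So the missing idea is not a sharper pairing estimate in your moment expansion, but rather to let hypercontractivity do the pairing for you: apply it to the function of $(a,b)$, not (as you suggest) to the individual $h_\pi$. Once you do that, no combinatorics beyond a single Cauchy--Schwarz is needed.
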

\noindent
To prove this lemma, we setup additional technical machinery.
Recall that for any $\rho \in [-1, \infty)$ the \emph{noise operator} $\cT_\rho$ is the linear operator on Boolean functions, defined so that for any Boolean function $f\colon \{-1, 1\}^m \to \R$ with Fourier expansion $f(x) = \sum_{S \subseteq [m]} \hat{f}(S) \chi_S (x)$ where $\chi_S(x) = \prod_{i \in S} x_i$, we have $\cT_\rho f(x) = \sum_{S \subseteq [m]} \rho^{|S|} \hat{f}(S) \chi_S (x)$.\footnote{The operator $\cT_{\rho}$ is typically only defined for $\rho \in [-1, 1]$, but one may naturally extend this definition to $\rho > 1$, see e.g. \cite{O14}.}
We will use the following version of the hypercontractivity theorem, which will allow us to bound moments of random Boolean functions.

\begin{theorem}[$(2, q)$-Hypercontractivity, Chapter 9 in \cite{O14}]
\label{thm:hypercontractivity}
Let $f\colon \{-1, 1\}^m \to \R$, and let $2 \leq q \leq \infty$.
Then for $\rho = 1/\sqrt{q-1}$,
$\Ex_{\bx \sim \{-1, 1\}^m} \left[ \left|\cT_{\rho} f(\bx) \right|^q \right] \leq \Ex_{\bx \sim \{-1, 1\}^m} \left[ f(\bx)^2 \right]^{q/2}
$.
\end{theorem}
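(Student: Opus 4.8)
The assertion is that $\|\cT_\rho f\|_q \le \|f\|_2$, where I write $\|g\|_r := \Ex_{\bx \sim \{-1,1\}^m}[\,|g(\bx)|^r\,]^{1/r}$ for $r \in [1,\infty)$ and $\|g\|_\infty := \max_x |g(x)|$; raising both sides to the power $q$ gives the displayed form. The plan is the classical two-step scheme: prove the one-bit case, then tensorize by induction on $m$.

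First I would handle the endpoints of $q$. For $q = 2$ we have $\rho = 1$ and $\cT_1 f = f$, so the inequality is an equality; for $q = \infty$ we have $\rho = 0$, $\cT_0 f \equiv \hat f(\emptyset) = \Ex_{\bx}[f(\bx)]$ is constant, and $|\Ex_{\bx}[f(\bx)]| \le \Ex_{\bx}[f(\bx)^2]^{1/2}$ by Cauchy--Schwarz. So assume $2 < q < \infty$.

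The core is the base case $m = 1$, the \emph{two-point inequality}: for all $a, b \in \R$ and $\rho = 1/\sqrt{q-1}$,
\[ \frac{|a + \rho b|^q + |a - \rho b|^q}{2} \le (a^2 + b^2)^{q/2} \]
(for $f(x) = a + bx$ this is exactly $\Ex_{\bx}[|\cT_\rho f(\bx)|^q] \le \Ex_{\bx}[f(\bx)^2]^{q/2}$). Both sides are degree-$q$ homogeneous in $(a,b)$ and invariant under $a \mapsto -a$, $b \mapsto -b$, so I may take $a = 1$, $b \ge 0$ (the case $a = 0$ only needs $\rho \le 1$, which holds since $q \ge 2$). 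Writing $t = \rho b \ge 0$, the inequality becomes $F(t) \le G(t)$ where $F(t) = \tfrac12\big((1+t)^q + |1-t|^q\big)$ and $G(t) = \big(1 + (q-1)t^2\big)^{q/2}$. If $q$ is an even integer, $F$ and $G$ are degree-$q$ polynomials and the inequality holds coefficient by coefficient: it suffices that $\binom{q}{2k} \le (q-1)^k \binom{q/2}{k}$ for each $k$, which follows from $\binom{q}{2k} \big/ \big((q-1)^k \binom{q/2}{k}\big) = (q-1)^{-(k-1)}\prod_{j=2}^{k} \frac{q-2j+1}{2j-1} \le 1$, each of the $k-1$ factors being at most $q-1$. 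For general real $q > 2$ I would first reduce to $t \in [0,1]$: for $t > 1$, substituting $t \mapsto 1/t$ and cancelling $t^{-q}$ turns $F(t) \le G(t)$ into $F(s) \le (s^2 + q - 1)^{q/2}$ with $s = 1/t \in (0,1)$, and $(s^2 + q - 1)^{q/2} \ge G(s)$ on $[0,1]$ since $q \ge 2$. On $[0,1]$ we have $|1-t| = 1-t$, and raising to the power $2/q \le 1$ it remains to show that $\psi(t) := F(t)^{2/q}$ satisfies $\psi(t) \le 1 + (q-1)t^2$. Here $\psi$ is even with $\psi(0) = 1$, $\psi'(0) = 0$ and $\psi''(0) = 2(q-1)$, so by Taylor's theorem with integral remainder it is enough to verify the one-variable estimate $\psi''(t) \le 2(q-1)$ on $[0,1]$ — a short explicit computation in which the negative term coming from the exponent $\tfrac2q - 1 < 0$ offsets the growth of the $F''$-contribution. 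I expect \emph{this base case to be the main obstacle}: it is where the precise constant $\rho = 1/\sqrt{q-1}$ enters, and the only place needing a genuine inequality rather than bookkeeping.

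Finally I would tensorize by induction on $m$. Granting the theorem on $\{-1,1\}^{m-1}$, I peel off the last coordinate: for $f \colon \{-1,1\}^m \to \R$, write $f(x) = g(y) + x_m\, h(y)$ with $y = (x_1, \dots, x_{m-1})$, $g(y) = \tfrac12(f(y,1) + f(y,-1))$, $h(y) = \tfrac12(f(y,1) - f(y,-1))$; comparing Fourier expansions gives $\cT_\rho f(x) = \cT_\rho g(y) + \rho\, x_m\, \cT_\rho h(y)$, where on the right $\cT_\rho$ is the noise operator on $\{-1,1\}^{m-1}$. For fixed $y$, applying the two-point inequality in the single variable $x_m$ with $a = \cT_\rho g(y)$, $b = \cT_\rho h(y)$, and then averaging over $\by$,
\[ \Ex_{\bx}\big[\,|\cT_\rho f(\bx)|^q\,\big] \le \Ex_{\by}\Big[\big((\cT_\rho g(\by))^2 + (\cT_\rho h(\by))^2\big)^{q/2}\Big] = \big\|\,(\cT_\rho g)^2 + (\cT_\rho h)^2\,\big\|_{q/2}^{q/2}. \]
Since $q/2 \ge 1$, Minkowski's inequality for $\|\cdot\|_{q/2}$ bounds $\big\|(\cT_\rho g)^2 + (\cT_\rho h)^2\big\|_{q/2}$ by $\|\cT_\rho g\|_q^2 + \|\cT_\rho h\|_q^2$, and the inductive hypothesis applied to $g$ and $h$ bounds this by $\|g\|_2^2 + \|h\|_2^2$. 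By orthogonality of the two summands of $f$, $\|g\|_2^2 + \|h\|_2^2 = \Ex_{\by}[g(\by)^2] + \Ex_{\by}[h(\by)^2] = \Ex_{\bx}[f(\bx)^2]$. Chaining the displays gives $\Ex_{\bx}[|\cT_\rho f(\bx)|^q] \le \Ex_{\bx}[f(\bx)^2]^{q/2}$, which together with the base case and the $q \in \{2, \infty\}$ cases completes the argument.
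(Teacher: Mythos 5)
The paper does not prove this statement; it is quoted as a black-box result from Chapter~9 of O'Donnell's book \cite{O14}, so there is no in-paper argument to compare against. Your proof is correct and is, up to presentation, the standard textbook argument (the one in the cited source): the one-bit ``two-point'' inequality, followed by tensorization via induction on $m$, using Minkowski's inequality in $L_{q/2}$ and the orthogonal decomposition $f = g + x_m h$. The even-$q$ coefficient comparison, the substitution $t \mapsto 1/t$ to reduce to $t \in [0,1]$, and the Taylor/second-derivative strategy for general $q$ are all sound.

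One small remark on the part you flagged as the main obstacle. The bound $\psi''(t) \le 2(q-1)$ on $[0,1]$ does hold, but the mechanism is slightly cleaner than the heuristic you give (that the negative $\bigl(\tfrac2q - 1\bigr)(F')^2$ term offsets the $F''$ term). In fact one can simply \emph{drop} the nonpositive $(F')^2$ term: writing $F''(t) = q(q-1)\,G(t)$ with $G(t) = \tfrac12\bigl((1+t)^{q-2} + (1-t)^{q-2}\bigr)$, the power-mean inequality (the $(q{-}2)$-power mean of $1+t$, $1-t$ is at most their $q$-power mean, as $q - 2 \le q$) gives $G \le F^{1 - 2/q}$, hence $F F'' \le q(q-1) F^{2-2/q}$, hence
\[
\psi''(t) \;=\; \tfrac{2}{q}\,F^{2/q - 2}\Bigl[\bigl(\tfrac{2}{q} - 1\bigr)(F')^2 + F F''\Bigr] \;\le\; \tfrac{2}{q}\,F^{2/q - 2}\cdot q(q-1) F^{2 - 2/q} \;=\; 2(q-1).
\]
So the negative term is not actually needed; this closes the computation you left implicit.
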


\begin{proof}[Proof of Lemma~\ref{lem:schatten-conc}]
Define the auxiliary Boolean function $g \colon \{-1,1\}^d \times \{-1,1\}^d \to \R$ by
\[
g(a,b) = \frac{1}{d!} \sum_{t = 1}^d \sum_{\substack{S, T \subseteq [d]\\|S| = |T| = t}} \Gamma_{S, T} \cdot \chi_{S} (a) \chi_{T} (b),
\]
for $\Gamma_{S,T}$ as in (\ref{eq:H-def}). Note that for $\sigma = \sqrt{\eps d^{1/p} (q-1)}$ and $\rho = 1/\sqrt{q-1}$, $H_h(a, b) = \cT_{\rho} \cT_{\sigma} g(a, b)$. For all $2 \leq q \leq \infty$, we have
\begin{align}
\Prx_{\ba,\bb \sim \{-1, 1\}^d} \left[ \left|H_h(\ba, \bb)\right| > \alpha \right] &\leq \frac{\Ex_{\ba, \bb \sim \{-1, 1\}^d} \left[ |H_h(\ba, \bb)|^q \right]}{\alpha^q} \nonumber \\
&\leq \frac{\Ex_{\ba, \bb \sim \{-1, 1\}^d} \left[ \cT_{\sigma} g(\ba, \bb)^2 \right]^{q/2}}{\alpha^q} \; , \label{eq:hypercontractivity}
\end{align}
where the first inequality follows from Markov's inequality, and the second from $(2,q)$-hypercontractivity (Theorem~\ref{thm:hypercontractivity}). By Parseval's identity, observe that
\begin{align*}
\Ex_{\ba, \bb \sim \{-1, 1\}^d} \left[ \cT_{\sigma} g(a, b)^2 \right] &\leq  \eps^2 d^{2/p} \left( \frac{1}{d!} \right)^2  \sum_{t = 1}^d q^t \sum_{\substack{S, T \subseteq [d]\\|S| = |T| = t}} \Gamma_{S, T}^2.
\end{align*}
For any fixed $1 \leq t \leq d$, recall from (\ref{eq:H-def}) that 
\begin{align*}
\sum_{\substack{S, T \subseteq [d]\\|S| = |T| = t}} \Gamma_{S, T}^2 &= \sum_{\substack{S, T \subseteq [d]\\|S| = |T| = t}} \left( \sum_{\substack{\pi \in \calS_d\\\pi(S) = T}} \hat{h_\pi} (S) \right)^2 \stackrel{(a)}{\leq} (d - t)! \sum_{\substack{S, T \subseteq [d]\\|S| = |T| = t}} ~ \sum_{\substack{\pi \in \calS_d\\\pi(S) = T}} \hat{h_\pi} (S)^2 \\
&= (d - t)! \sum_{\pi \in \calS_d} \sum_{\substack{S \subseteq [d]\\|S| = t}} \hat{h_\pi} (S)^2 \stackrel{(b)}{\leq} (d - t)! d! \; ,
\end{align*}
where (a) follows by Cauchy-Schwarz, and (b) follows since 
$\frac{1}{d!} \sum_{\pi \in \calS_d} \sum_{S \subset [d]} \hat{h_{\pi}}(S)^2 = 1$, as $\|h\|_D = 1$. 
Summing over all $t \in [d]$, we have
\begin{align}
\Ex_{\ba, \bb \sim \{-1, 1\}^d} \left[ \cT_{\sigma} g(\ba, \bb)^2 \right] &\leq \eps^2 d^{2/p} \left( \frac{1}{d!} \right)^2 \sum_{t = 1}^d q^t (d - t)! d! = \eps^2 d^{2/p} \sum_{t = 1}^d \frac{q^t (d - t)!}{d!} \nonumber\\
& = q \eps^2 d^{2/p - 1} \sum_{t = 0}^{d - 1} \frac{q^t (d - t - 1)!}{(d - 1)!}, \label{eq:h1_bound_1}
\end{align}
and using Stirling's approximation, %yields the estimates 
%\[
%\sqrt{2 \pi n} \left( \frac{n}{e} \right)^n \leq e \sqrt{2 \pi n} \left( \frac{n}{e} \right)^n \; .
%\]
%Plugging these bounds into the above yields that
\begin{align*}
\sum_{t = 0}^{d - 1} \frac{q^t (d - t - 1)!}{(d - 1)!} &\leq \sum_{t = 0}^{d - 1} e q^t  \sqrt{\frac{ d - t - 1}{d - 1}}  \left( \frac{(d - t - 1)}{e} \right)^{d - t- 1}  \left( \frac{e}{d - 1} \right)^{d - 1} \\
&\leq e \sum_{t = 0}^{d - 1} \left( \frac{eq}{d} \right)^{t} \leq 2e \; ,
\end{align*}
for all $q \leq d/(2e)$.
Therefore~\eqref{eq:h1_bound_1} simplifies to give
$\Ex_{\ba, \bb \sim \{-1, 1\}^d} \left[ \cT_{\sigma}g(\ba, \bb)^2 \right] \leq 2 e q \eps^2 d^{2/p - 1}$,
for all $q \leq d/(2e)$, and plugging this bound into~\eqref{eq:hypercontractivity} while letting $\alpha = 4 e \sqrt{q} \cdot \eps / d^{1/2 - 1/p}$, we obtain the desired concentration bound.
\end{proof}

\begin{flushleft}
\bibliographystyle{alpha}
\bibliography{waingarten}

\newcommand{\etalchar}[1]{$^{#1}$}
\begin{thebibliography}{FGR{\etalchar{+}}13b}

\bibitem[ALS{\etalchar{+}}18]{andoni2018subspace}
Alexandr Andoni, Chengyu Lin, Ying Sheng, Peilin Zhong, and Ruiqi Zhong.
\newblock Subspace embedding and linear regression with orlicz norm.
\newblock {\em arXiv preprint arXiv:1806.06430}, 2018.

\bibitem[ANN{\etalchar{+}}17]{andoni2017approximate}
Alexandr Andoni, Huy~L Nguyen, Aleksandar Nikolov, Ilya Razenshteyn, and Erik
  Waingarten.
\newblock Approximate near neighbors for general symmetric norms.
\newblock In {\em Proceedings of the 49th Annual ACM SIGACT Symposium on Theory
  of Computing}, pages 902--913. ACM, 2017.

\bibitem[BBC{\etalchar{+}}17]{blasiok2017streaming}
Jaros{\l}aw B{\l}asiok, Vladimir Braverman, Stephen~R Chestnut, Robert
  Krauthgamer, and Lin~F Yang.
\newblock Streaming symmetric norms via measure concentration.
\newblock In {\em Proceedings of the 49th Annual ACM SIGACT Symposium on Theory
  of Computing}, pages 716--729. ACM, 2017.

\bibitem[BCL94]{BCL94}
Keith Ball, Eric~A. Carlen, and Elliott~H. Lieb.
\newblock Sharp uniform convexity and smoothness inequalities for trace norms.
\newblock {\em Invent. Math.}, 115(3):463--482, 1994.

\bibitem[BPR18]{bubeck2018adversarial}
S{\'e}bastien Bubeck, Eric Price, and Ilya Razenshteyn.
\newblock Adversarial examples from computational constraints.
\newblock {\em arXiv preprint arXiv:1805.10204}, 2018.

\bibitem[DKS17]{diakonikolas2017statistical}
Ilias Diakonikolas, Daniel~M Kane, and Alistair Stewart.
\newblock Statistical query lower bounds for robust estimation of
  high-dimensional gaussians and gaussian mixtures.
\newblock In {\em Foundations of Computer Science (FOCS), 2017 IEEE 58th Annual
  Symposium on}, pages 73--84. IEEE, 2017.

\bibitem[DKS19]{diakonikolas2019efficient}
Ilias Diakonikolas, Weihao Kong, and Alistair Stewart.
\newblock Efficient algorithms and lower bounds for robust linear regression.
\newblock In {\em Proceedings of the Thirtieth Annual ACM-SIAM Symposium on
  Discrete Algorithms}, pages 2745--2754. SIAM, 2019.

\bibitem[Fel16a]{feldman2016general}
Vitaly Feldman.
\newblock A general characterization of the statistical query complexity.
\newblock {\em arXiv preprint arXiv:1608.02198}, 2016.

\bibitem[Fel16b]{feldman2016statistical}
Vitaly Feldman.
\newblock Statistical query learning.
\newblock In {\em Encyclopedia of Algorithms}, pages 2090--2095. 2016.

\bibitem[FGR{\etalchar{+}}13a]{feldman2013statistical}
Vitaly Feldman, Elena Grigorescu, Lev Reyzin, Santosh Vempala, and Ying Xiao.
\newblock Statistical algorithms and a lower bound for detecting planted
  cliques.
\newblock In {\em Proceedings of the forty-fifth annual ACM symposium on Theory
  of computing}, pages 655--664. ACM, 2013.

\bibitem[FGR{\etalchar{+}}13b]{FGRVX13}
Vitaly Feldman, Elena Grigorescu, Lev Reyzin, Santosh Vempala, and Ying Xiao.
\newblock Statistical algorithms and a lower bound for detecting planted
  cliques.
\newblock In {\em Proceedings of the 45th {ACM} Symposium on the Theory of
  Computing ({STOC}~'2013)}, 2013.

\bibitem[FGV17]{FGV17}
Vitaly Feldman, Crist\'{o}bal Guzm\'{a}n, and Santosh Vempala.
\newblock Statistical query algorithms for mean vector estimation and
  stochastic convex optimization.
\newblock In {\em Proceedings of the 28th {ACM-SIAM} Symposium on Discrete
  Algorithms ({SODA}~'2017)}, 2017.

\bibitem[FPV18]{FPV18}
Vitaly Feldman, Will Perkins, and Santosh Vempala.
\newblock On the complexity of random satisfiability problems with planted
  solutions.
\newblock {\em {SIAM} Journal on Computing}, 47(4):1294--1338, 2018.

\bibitem[Kea98]{kearns1998efficient}
Michael Kearns.
\newblock Efficient noise-tolerant learning from statistical queries.
\newblock {\em Journal of the ACM (JACM)}, 45(6):983--1006, 1998.

\bibitem[KLN{\etalchar{+}}11]{kasiviswanathan2011can}
Shiva~Prasad Kasiviswanathan, Homin~K Lee, Kobbi Nissim, Sofya Raskhodnikova,
  and Adam Smith.
\newblock What can we learn privately?
\newblock {\em SIAM Journal on Computing}, 40(3):793--826, 2011.

\bibitem[KS07]{klivans2007unconditional}
Adam~R Klivans and Alexander~A Sherstov.
\newblock Unconditional lower bounds for learning intersections of halfspaces.
\newblock {\em Machine Learning}, 69(2-3):97--114, 2007.

\bibitem[LT11]{LedouxTalagrand}
Michel Ledoux and Michel Talagrand.
\newblock {\em Probability in {B}anach spaces}.
\newblock Classics in Mathematics. Springer-Verlag, Berlin, 2011.
\newblock Isoperimetry and processes, Reprint of the 1991 edition.

\bibitem[O'D14]{O14}
Ryan O'Donnell.
\newblock {\em Analysis of Boolean Functions}.
\newblock Cambridge University Press, 2014.

\bibitem[SVW16]{steinhardt2016memory}
Jacob Steinhardt, Gregory Valiant, and Stefan Wager.
\newblock Memory, communication, and statistical queries.
\newblock In {\em Conference on Learning Theory}, pages 1490--1516, 2016.

\bibitem[SVWX17]{song2017complexity}
Le~Song, Santosh Vempala, John Wilmes, and Bo~Xie.
\newblock On the complexity of learning neural networks.
\newblock In {\em Advances in Neural Information Processing Systems}, pages
  5514--5522, 2017.

\bibitem[SWZ18]{song2018towards}
Zhao Song, David~P Woodruff, and Peilin Zhong.
\newblock Towards a zero-one law for entrywise low rank approximation.
\newblock {\em arXiv preprint arXiv:1811.01442}, 2018.

\bibitem[Woj96]{wojtaszczyk1996banach}
Przemyslaw Wojtaszczyk.
\newblock {\em Banach spaces for analysts}, volume~25.
\newblock Cambridge University Press, 1996.

\end{thebibliography}
\end{flushleft}

\end{document}